\definecolor{darkblue}{rgb}{0.0,0.0,0.3}
\numberwithin{equation}{section}
\newtheorem{proposition}{Proposition}[section]
\newtheorem{definition}{Definition}[section]
\newtheorem{remark}{Remark}[section]
\newtheorem{theorem}{Theorem}[section]
\newtheorem{lemma}{Lemma}[section]
\newtheorem{corollary}{Corollary}[section]
\renewcommand{\epsilon}{\varepsilon}
\begin{document}

\author{Saikat Patra}
\address{Department of Mathematical Sciences, Indian Institute of
Science Education \& Research (IISER) Berhampur, Transit campus,
Berhampur, 760 010, Ganjam, Odisha, India.}
\email[Saikat Patra]{\href{mailto:saikatp@iiserbpr.ac.in}{saikatp@iiserbpr.ac.in}}

\title{A study on Heisenberg-Weyl linear maps}
\author{Bihalan Bhattacharya}
\address{Institute of Physics, Faculty of Physics, Astronomy and Informatics, Nicolaus Copernicus University, Grudziadzka 5/7, 87-100 Torun, Poland}
\email[Bihalan Bhattacharya]{\href{mailto:bihalan@umk.pl }{bihalan@umk.pl }}

\begin{abstract}
 
\par Heisenberg-Weyl operators provide a Hermitian generalization of Pauli operators in higher dimensions. Positive maps arising from Heisenberg-Weyl operators have been studied along with several algebraic and spectral properties of Heisenberg-Weyl observables. This allows to generalize the study of Pauli type maps in higher dimesional algebra of operators.    \\ 


\end{abstract}
\maketitle
	
\section{Introduction}\label{sec:1}
Study of positive linear maps between algebra of operators is interesting from various aspects \cite{book2, book1}. Let $B\left(\mathbf{C}^d\right)$ stands for linear operators acting on d dimensional complex Hilbert space $\mathbf{C}^d$. A linear map $\Phi$ acting on $B\left(\mathbf{C}^d\right)$ is said to be positive if $\Phi (X) \geq 0$ for any $X \geq 0$ i.e. it takes a positive semidefinite operator to a positive semidefinite operator and k-positive if the map $\mathbf{I}_k \otimes \Phi$ acting on $\mathbf{M}_k \otimes B\left(\mathbf{C}^d\right) $ is positive for some natural number k where $\mathbf{M}_k$ stands for the algebra of k $\times$ k complex matrices. Completely positive maps \cite{Stinespring55} are those positive maps which are k-positive for all $k \in \mathbf{N}$. Completely positive trace preserving $\left( i.e.\text{Tr}(\Phi[X])=\text{Tr}[X] \right)$ linear maps are considered as quantum channels which are often used to study the dynamics of open quantum systems \cite{C2022, breuer,rivas1,breuerN,alonso} and quantum communications \cite{NC10,W13,W18}.
Positive but not completely positive maps play instrumental role in detecting quantum entanglement\cite{P96,H96,H97,H09,GT09,CS2014}.\\

Despite of considerable efforts from both mathematics and physics community \cite{S63,C75,WORONO76,Tang86,OSAKA1991,Osaka93,Ha98,Ha02,K03,Ha03,M07,ZC13,C14,LACRS16,MR17} the structure of positive map is still uncharacterized fully even in low dimensions and it requires extensive research. 
There is no universal known procedure for construction of positive maps. 
For example in the simplest case if one considers linear maps acting on algebra of 2 $\times$ 2 complex matrices, one can construct a class of positive maps arising from Pauli matrices which is famously known as Pauli maps. Pauli operators form a orthogonal basis for the real vector space of Hermitian operators in $B\left(\mathbf{C}^2\right) $. Pauli operators are Hermitian as well as unitary. One way to construct the Hermitian generalization of Pauli operators in higher dimension is considering Generalized Gell-Mann matrices. In \cite{AEHK16}, authors have  constructed an orthogonal basis for algebra of operators acting on d-dimensional complex Hilbert space. This basis comprises of Hermitian operators, therefore 
 they are observables acting on d-level systems. These operators are reffered as Heisenberg-Weyl (H.W.) observables in \cite{AEHK16}. H.W.-operators arise from the displacement operators acting on d-dimensional systems. The aim of this construction is to generalize the Pauli operators in arbitrary dimensions. The H.W.-operators are different from generalized Gell-Mann operators in various aspects which potentially allows one to use them in different quantum information-theoretic tasks.\\

In this paper we aim to study some algebraic and spectral properties of the Heisenberg-Weyl operators  and mainly we are interested in studying positive maps which can be constructed out of these operators . Heisenberg-Weyl operators being a generalization of Pauli operators, allow us to study generalization of Pauli maps.  \\

This paper is organized as following. In section (\ref{II}) we discuss on Weyl operators and construction of generalized Pauli channels. In section (\ref{III}) we study the algebraic and spectral properties of Heisenberg-Weyl observables. We introduce Heisenberg-Weyl maps in section(\ref{IV}) and discuss its unitality condition. We further discuss construction of generalized Pauli channels with the help of H.W. observables and the possibility of further generalizations. We study the eigenvalue equation corresponding to linear maps arising from  Heisenberg-Weyl observables and  we also do a case study of Heisenberg-Weyl maps acting on $B\left(\mathbf{C}^3\right)$. Finally we conclude in section (\ref{V}).

\section{Weyl operators and Generalized Pauli channels}\label{II}
Consider d dimensional complex Hilbert space $\mathbf{C}^d$ with orthogonal basis $\lbrace \vert 0 \rangle, \vert 1 \rangle,..., \vert d-1 \rangle \rbrace $. Generalized Pauli "Shift" and "Phase" operators viz. $X$ and $Z$ are defined on $\mathbf{C}^d$ by

\begin{eqnarray*}
    X \vert r \rangle = \vert r+1  \rangle
\end{eqnarray*}
\begin{eqnarray*}
    Z \vert r \rangle = exp\left(\frac{i 2 \pi r }{d}\right) \vert r \rangle \qquad \text{with} \quad i=\sqrt{-1}
\end{eqnarray*}
Weyl operators acting on d-dimensional systems are defined by 
\begin{eqnarray*}
    W_{k,l}=X^lZ^k
\end{eqnarray*}
Note that in the definition of the X operator, the symbol $'+'$ stands for addition modulo d. \\

One can construct displacement operators for d-dimension as 
\begin{equation}\label{Weyl}
    D_{k,l}=exp\left(\frac{-i \pi k l  }{d}\right) Z^{k} X^{l}= exp\left(\frac{i \pi k l  }{d}\right) W_{k,l}.
\end{equation}
We also note here that the multiplication of k and l in the definition of $D_{k,l}$ is performed under congruence modulo d.\\

Weyl operators are unitary by construction and satisfy the following relations 
\begin{eqnarray*}
W_{k,l} W_{r,s} = exp\left( \frac{i 2 \pi ks }{d}\right) W_{k+r, l+s} ~~\text{and}~~W^{\dagger}_{k,l}=exp\left( \frac{i 2 \pi kl }{d}\right) W_{-k,-l}.
\end{eqnarray*}

Moreover Weyl operators have another interesting property of commutation. In fact given dimension d  is prime then $d^2-1$ Weyl operators excluding identity can be split into $d+1$ subsets each consisting of $d-1$ mutually commuting operators. The operators
\[\lbrace  W_{rk,rl}, r=1, 2,...,d-1 \rbrace\] with $rk=k + k +....+ k ~~  \text{ r times and}~~ rl= l + l + ..... +l $ (r times) belong to the same commuting subset. \\

 Now one can define a linear map $\Lambda_W : B\left(\mathbf{C}^d\right) \rightarrow B\left(\mathbf{C}^d\right)$ as                                                                                                                                                                                      \begin{eqnarray*}
    \Lambda_W (Y) = \sum_{k,l=0}^{d-1} p_{k,l}~ W_{k,l} Y~ W^{\dagger}_{k,l} ~~\text{with} ~~ \sum^{d-1}_{k,l} p_{k,l} = 1.
\end{eqnarray*}

The above map $\Lambda_W$ is a convex combination of Weyl operators which are unitary in nature. It is a complete positive map and is known as Weyl channels. For $d=2$ this reduces to the famous Pauli channels. Unlike Pauli channels Weyl channels are not Hermitian for any $d > 2$. To retain Hermitian property for higher dimensions the idea of generalized Pauli channels was introduced in \cite{CS2016} 
Let us briefly discuss on generalized Pauli channels here. 

  Assuming that the d-dimensional Hilbert space has the maximal number $d+1$ of MUBs, let them be $\lbrace  \vert \eta^{\alpha}_0 \rangle,\vert \eta^{\alpha}_1 \rangle, ....,, \vert \eta^{\alpha}_{d-1} \rangle \rbrace$ with $\alpha=1, 2,,,,,d+1$. One can define $d+1$ completely positive maps 
\begin{eqnarray*}
    \mathbf{V}_\alpha = d \Phi_\alpha - \mathcal{I}
\end{eqnarray*}
where $\mathbf{I}$ stands for d-dimensional identity operator 
with \begin{equation} \label{proj}
    \Phi_\alpha (\sigma)=\sum_{r=0}^{d-1} P_r^{\alpha} \sigma  P_r^{\alpha}
\end{equation}
where $P_r^{\alpha} = \vert \eta^{\alpha}_r \rangle\langle \eta^{\alpha}_r \vert $ is the rank-1 projection corresponding to the vector $\vert \eta^{\alpha}_{r} \rangle$.  Generalized Pauli channels can be defined as follows
\begin{equation}\label{GPauli}
    \Lambda = p_0 \mathcal{I}+\frac{1}{d-1} \sum_{\alpha=1}^{d+1} p_{\alpha} \mathbf{V}_\alpha
\end{equation}
Here $\mathcal{I}$ stands for the identity channel.\\

One can observe that if $\alpha \neq \beta$, then
\begin{eqnarray*}
   \Phi_{\alpha} \circ \Phi_{\beta} (\sigma) = \Phi_{\beta} \circ \Phi_{\alpha} (\sigma) = \frac{1}{d} \mathbf{I}~ \text{Tr} (\sigma)
\end{eqnarray*}

 and for a d-dimensional system, where d is prime, generalized Pauli channels are convex mixture of $d+1$ completely positive maps which mutually commute with each other.\\

It is to be noted that the construction of generalized Pauli channel in (\ref{GPauli}) relies on the existence of mutually unbiased bases (MUB). In this context it is important to mention that the Weyl operators are connected with MUB. In fact if the dimension of a system d is prime, then one can construct a complete set of MUBs with Weyl operators. In this case the eigenbases of the operators $W_{0,1},W_{1,0},W_{1,1},.....,W_{1,d-1} $ give rise to $d+1$ MUB. Therefore for prime d, one can construct generalized Pauli channels with the help of Weyl operators as well. Now our aim is to provide a different representation of generalized Pauli channels and generalize it further.

\section{Heisenberg-Weyl observables: }\label{III}
  Let us consider $D_{k,l}$,  the displacement operators acting on d-dimensional system. \\
  
  \begin{definition} 
      Heisenberg-Weyl operators \cite{AEHK16} acting on d-dimensional system is defined as

      \begin{eqnarray*}
    Q_{k,l}= \chi D_{k,l} + \chi^{*} D^{\dagger}_{k,l}
\end{eqnarray*}
where the possible values of $\chi$ are $\frac{1\pm i}{2}$,
      
\end{definition}
  
 The values of $\chi$ are obtained from the orthogonality conditions. These operators $Q_{k,l}$ are called Heisenberg-Weyl (H.W.) observables as they are Hermitian by construction and the corresponding basis is known as the Heisenberg-Weyl (H.W.) basis. In this work we have studied H.W. observables and their properties from the perspective of positive maps. Following are some properties of H.W. observables.\\

\subsection{Algebraic properties:}
In this section we shall discuss several algebraic properties of H.W. observables. These include commutation relations of H.W. observables and several identities involving them. \\
\begin{proposition}
    If H.W. observables are defined on prime dimension d, given any k,l 
    \begin{eqnarray*}
        Q_{n_1k,n_1l}~ Q_{n_2k,n_2l}=Q_{n_2k,n_2l}~Q_{n_1k,n_1l}
    \end{eqnarray*}
    where $n_ik=k + ....+k$ ($n_i$ times) and $n_il= l+....+ l$ ($n_i$ times) with $i=1,2$.
\end{proposition}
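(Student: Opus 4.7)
The plan is to reduce commutativity of the Hermitian combinations $Q_{n_1 k, n_1 l}$ and $Q_{n_2 k, n_2 l}$ to commutativity of the underlying displacement operators and their adjoints, which in turn follows from the Weyl multiplication rule stated just after equation \eqref{Weyl}.

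First I would show that $W_{n_1 k, n_1 l}$ and $W_{n_2 k, n_2 l}$ commute. Applying $W_{k,l} W_{r,s} = \exp(2\pi i ks/d)\, W_{k+r, l+s}$ to both orderings gives
\begin{equation*}
W_{n_1 k, n_1 l}\, W_{n_2 k, n_2 l} = e^{2\pi i n_1 n_2 kl/d}\, W_{(n_1+n_2)k, (n_1+n_2)l} = W_{n_2 k, n_2 l}\, W_{n_1 k, n_1 l},
\end{equation*}
because the phase $e^{2\pi i n_1 n_2 kl / d}$ is symmetric in the interchange $n_1 \leftrightarrow n_2$. Since $D_{m,n}$ differs from $W_{m,n}$ only by a scalar phase (namely $e^{i\pi mn/d}$), the displacement operators $D_{n_1 k, n_1 l}$ and $D_{n_2 k, n_2 l}$ commute as well.

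Next, using that each $D_{m,n}$ is unitary, commutativity $AB=BA$ upgrades to commutativity of all four pairs in $\{A, A^\dagger\} \times \{B, B^\dagger\}$: conjugating $AB=BA$ by $B$ gives $B^\dagger A B = A$, hence $B^\dagger A = A B^\dagger$; the other two relations follow by symmetry or by taking adjoints. Applied to $A = D_{n_1 k, n_1 l}$ and $B = D_{n_2 k, n_2 l}$, this means that each of $D_{n_1 k, n_1 l}$ and $D_{n_1 k, n_1 l}^\dagger$ commutes with each of $D_{n_2 k, n_2 l}$ and $D_{n_2 k, n_2 l}^\dagger$.

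Finally, I would expand the definition
\begin{equation*}
Q_{n_1 k, n_1 l}\, Q_{n_2 k, n_2 l} = \bigl(\chi D_{n_1 k, n_1 l} + \chi^{*} D_{n_1 k, n_1 l}^\dagger\bigr)\bigl(\chi D_{n_2 k, n_2 l} + \chi^{*} D_{n_2 k, n_2 l}^\dagger\bigr)
\end{equation*}
into four terms, swap the order in each term using the four commutation relations just established, and reassemble to obtain $Q_{n_2 k, n_2 l}\, Q_{n_1 k, n_1 l}$. The only potential obstacle is keeping track of the phases coming from $D_{k,l}$ versus $W_{k,l}$, but these scalars factor out and the crucial phase cancellation happens already at the Weyl level. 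I note that primality of $d$ is not actually invoked in this argument; it plays a role elsewhere (e.g., in splitting the $d^2 - 1$ nontrivial Weyl operators into $d+1$ commuting subsets), but the commutation within a single such subset holds for any $d$.
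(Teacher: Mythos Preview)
Your argument is correct and follows essentially the same route as the paper: expand each $Q$ in terms of $D$ and $D^\dagger$, reduce to commutativity of the underlying Weyl operators, and reassemble. Two small differences are worth noting. First, the paper converts each $D^\dagger_{n_ik,n_il}$ into $W_{-n_ik,-n_il}$ (carrying an overall phase $\eta$) and then cites the Weyl commutation property, whereas you handle the adjoints more economically via the unitarity argument $AB=BA \Rightarrow B^\dagger A = A B^\dagger$; this saves bookkeeping but is logically equivalent. Second, and more interestingly, you verify the Weyl commutation directly from the multiplication rule and observe that the phase $e^{2\pi i n_1 n_2 kl/d}$ is manifestly symmetric in $n_1,n_2$, so primality of $d$ is never used. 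The paper instead invokes ``the commutation property of Weyl operators in prime dimension'' as a black box. Your observation is correct: primality is needed to partition \emph{all} $d^2-1$ nontrivial operators into $d+1$ disjoint commuting families, but the commutation within a single family $\{W_{nk,nl}\}_n$ holds for arbitrary $d$.
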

\begin{proof}
    Using the definition of $Q_{k,l}$, we note that
    \begin{eqnarray*}
        \begin{split}
         &Q_{n_1k,n_1l}~ Q_{n_2k,n_2l}\\
         &=\left( \kappa D_{n_1k,n_1l}+\kappa^* D^{\dagger}_{n_1k,n_1l}\right) \left( \kappa D_{n_2k,n_2l}+\kappa^* D^{\dagger}_{n_2k,n_2l}\right) \\
         &=\kappa^2 D_{n_1k,n_1l} D_{n_2k,n_2l} +\kappa \kappa^* D_{n_1k,n_1l}D^{\dagger}_{n_2k,n_2l}+\kappa^*\kappa D^{\dagger}_{n_1k,n_1l}D_{n_2k,n_2l}+\kappa^{*2} D^{\dagger}_{n_1k,n_1l}D^{\dagger}_{n_2k,n_2l}\\
         &= \eta \left(\kappa^2 W_{n_1k,n_1l} W_{n_2k,n_2l} +\kappa \kappa^* W_{n_1k,n_1l}W_{-n_2k,-n_2l}+\kappa^*\kappa W_{-n_1k,-n_1l}W_{n_2k,n_2l}+\kappa^{*2} W_{-n_1k,-n_1l}W_{-n_2k,-n_2l} \right)\\
         &=\eta \left(\kappa^2 W_{n_2k,n_2l} W_{n_1k,n_1l}  +\kappa \kappa^* W_{-n_2k,-n_2l} W_{n_1k,n_1l}+\kappa^*\kappa W_{n_2k,n_2l} W_{-n_1k,-n_1l}+\kappa^{*2} W_{-n_2k,-n_2l} W_{-n_1k,-n_1l} \right)\\
         &=\kappa^2D_{n_2k,n_2l}  D_{n_1k,n_1l}  +\kappa \kappa^* D_{n_2k,n_2l} D^{\dagger}_{n_1k,n_1l}+\kappa^*\kappa D^{\dagger}_{n_2k,n_2l}D_{n_1k,n_1l}+\kappa^{*2} D^{\dagger}_{n_2k,n_2l} D^{\dagger}_{n_1k,n_1l}\\
         &=\left( \kappa D_{n_2k,n_2l}+ \kappa^* D^{\dagger}_{n_2k,n_2l}\right) \left(\kappa D_{n_1k,n_1l}+ \kappa^* D^{\dagger}_{n_1k,n_1l} \right)\\
         &=Q_{n_2k,n_2l}~Q_{n_1k,n_1l}
        \end{split}
    \end{eqnarray*}
    where we have used the commutation property of Weyl operators in prime dimension \cite{CS2016} along with $\eta=exp \left( \frac{\pi i (n_1k~n_1l+n_2k~n_2l)}{d}\right)$.
\end{proof}
Above proposition leads to the following remark.\\

\textbf{Remark:} If dimension of the system is prime and let us consider the set of operators $\lbrace Q_{nk,nl}: n=1, 2,...., d-1 \rbrace$ with arbitrary k and l where $nk=k +...+ k~~ (n \text{times}),nl=l +....+ l (n~ \text{times})$. The elements in this set commute with each other. \\

\begin{proposition} \label{p1} Given any $d \in \mathbf{N}$, \\
   \begin{center}
 $ Q_{k,l}^2 + Q_{d-k,d-l}^2 = 2 \mathbf{I}$
\end{center}
\end{proposition}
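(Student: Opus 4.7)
The plan is to reduce the second square $Q_{d-k,d-l}^2$ to an expression involving only $D_{k,l}$ and $D_{k,l}^{\dagger}$, then exploit an algebraic cancellation coming from the specific choice $\chi = (1\pm i)/2$.

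First I would establish the key identity $D_{d-k,d-l} = D_{k,l}^{\dagger}$. Since the labels live modulo $d$, one can replace $d-k,d-l$ by $-k,-l$ in $D_{k,l}=e^{-i\pi kl/d}Z^{k}X^{l}$, obtaining $D_{-k,-l}=e^{-i\pi kl/d}Z^{-k}X^{-l}$ (using $(-k)(-l)\equiv kl \pmod d$ for the phase). On the other hand, taking the adjoint directly gives $D_{k,l}^{\dagger}=e^{i\pi kl/d}X^{-l}Z^{-k}$, and the Weyl commutation $ZX=\omega XZ$ with $\omega=e^{2\pi i/d}$ yields $X^{-l}Z^{-k}=e^{-2\pi i kl/d}Z^{-k}X^{-l}$; combining these two gives exactly $D_{-k,-l}$. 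Consequently
\begin{eqnarray*}
Q_{d-k,d-l} \;=\; \chi\,D_{k,l}^{\dagger}+\chi^{*}\,D_{k,l},
\end{eqnarray*}
i.e.\ passing from $Q_{k,l}$ to $Q_{d-k,d-l}$ amounts to swapping the roles of $\chi$ and $\chi^{*}$.

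Next I would expand both squares. Because $D_{k,l}$ is unitary, $D_{k,l}D_{k,l}^{\dagger}=D_{k,l}^{\dagger}D_{k,l}=\mathbf{I}$, so
\begin{eqnarray*}
Q_{k,l}^{2}&=&\chi^{2}D_{k,l}^{2}+2|\chi|^{2}\mathbf{I}+(\chi^{*})^{2}(D_{k,l}^{\dagger})^{2},\\
Q_{d-k,d-l}^{2}&=&(\chi^{*})^{2}D_{k,l}^{2}+2|\chi|^{2}\mathbf{I}+\chi^{2}(D_{k,l}^{\dagger})^{2}.
\end{eqnarray*}
Adding gives
\begin{eqnarray*}
Q_{k,l}^{2}+Q_{d-k,d-l}^{2}=\bigl(\chi^{2}+(\chi^{*})^{2}\bigr)\bigl(D_{k,l}^{2}+(D_{k,l}^{\dagger})^{2}\bigr)+4|\chi|^{2}\mathbf{I}.
\end{eqnarray*}

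Finally I would substitute $\chi=(1\pm i)/2$, so that $|\chi|^{2}=\tfrac{1}{2}$ and $\chi^{2}=\pm\tfrac{i}{2}=-\,(\chi^{*})^{2}$. The bracket in front of the nontrivial operator part vanishes, while $4|\chi|^{2}=2$, giving $Q_{k,l}^{2}+Q_{d-k,d-l}^{2}=2\mathbf{I}$. The only genuinely non-cosmetic step is the first one, namely checking carefully that the mod-$d$ reduction of indices together with the Weyl commutation really delivers $D_{d-k,d-l}=D_{k,l}^{\dagger}$; everything after that is a short algebraic identity tailored to the two admissible values of $\chi$.
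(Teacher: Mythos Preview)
Your argument is correct. Both you and the paper expand the squares, use $|\chi|^{2}=\tfrac12$ for the identity contribution, and use $(\chi^{*})^{2}=-\chi^{2}$ to reduce the claim to an operator identity among the $D$'s. The difference lies in how that residual identity is handled. The paper shows directly that $D_{k,l}^{2}+D_{d-k,d-l}^{2}-(D_{k,l}^{\dagger})^{2}-(D_{d-k,d-l}^{\dagger})^{2}=0$ by writing each term as a power of $Z^{k}X^{l}$ and computing the action on basis vectors $|v\rangle$, a somewhat lengthy coordinate calculation. You instead isolate the single relation $D_{d-k,d-l}=D_{k,l}^{\dagger}$ (equivalently $D_{-k,-l}=D_{k,l}^{\dagger}$), which the paper in fact proves later as equation~(\ref{dagger}); once this is in hand, the four-term expression collapses pairwise with no further computation, and your final sum $(\chi^{2}+(\chi^{*})^{2})(D_{k,l}^{2}+(D_{k,l}^{\dagger})^{2})$ vanishes immediately. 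Your route is shorter and more structural, and it makes transparent that the result hinges only on unitarity of $D_{k,l}$, the duality $D_{-k,-l}=D_{k,l}^{\dagger}$, and the specific quadratic relation $\chi^{2}+(\chi^{*})^{2}=0$ satisfied by $\chi=(1\pm i)/2$. The paper's coordinate computation buys nothing extra here, though it does avoid having to verify the mod-$d$ phase bookkeeping in $D_{-k,-l}=D_{k,l}^{\dagger}$ that you flag as the one non-cosmetic step.
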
  
    
\begin{proof}
    Let us recall that the operator $Q_{kl}=\chi D_{kl}+\chi^{*} D^{\dagger}_{kl}$, where $\chi=\frac{1+i}{2} \text{or} \frac{1-i}{2}$ and $D_{kl}=exp\left(\frac{-i \pi k l }{d}\right) Z^{k} X^{l}$. The operator Z acts on $\mathbf{C}^d$ as $Z \vert v \rangle = exp\left(\frac{2i \pi v}{d} \right)\vert v \rangle$ and the operator X  acts on $\mathbf{C}^d$ as $X \vert v \rangle = \vert v+1 \rangle.$ Therefore the operator $Z^{k}X^{l}$ acts on $\mathbf{C}^d$ as 
\begin{eqnarray*}
    Z^{k}X^{l} \vert v \rangle = exp \left( \frac{2 \pi i k (v+l)}{d}\right) \vert v + l \rangle
\end{eqnarray*}
and its adjoint as 
\begin{eqnarray*}
    \left(Z^{k}X^{l}\right)^{\dagger} \vert v \rangle = exp \left( \frac{-2 \pi i k v}{d}\right) \vert v - l \rangle
\end{eqnarray*}
Therefore,
Now it is enough to prove that, $\left(D^2_{kl}+D^2_{d-k,d-l}-D^{\dagger2}_{kl}-D^{\dagger2}_{d-k,d-l}\right)$ is zero operator.\\
Note that, 
\begin{eqnarray*}
    \begin{split}
        &\left(D^2_{kl}+D^2_{d-k,d-l}-D^{\dagger2}_{kl}-D^{\dagger2}_{d-k,d-l}\right)\\
        &=\begin{aligned}[t]
        &\left(exp \left( \frac{-4 \pi i k l}{d}\right) Z^{2k}X^{2l} -exp \left( \frac{4 \pi i k l}{d}\right) (Z^{2k}X^{2l})^{\dagger}\right)\\
        &+\left(exp \left( \frac{-4 \pi i(d-k)(d-l)}{d}\right) Z^{2(d-k)}X^{2(d-l)}-exp \left( \frac{4 \pi i (d-k)(d-l)}{d}\right) (Z^{2(d-k)}X^{2(d-l)})^{\dagger}\right)
        \end{aligned}\\
        &=\begin{aligned}[t]
        &exp \left( \frac{4 \pi ik(v+l)}{d}\right) \vert v+2 l \rangle-exp \left( \frac{4 \pi ik(l-v)}{d}\right) \vert v-2 l  \rangle \\
        &+exp \left( \frac{4 \pi i(d-k)(d+v-l)}{d}\right) \vert v-2 l+2d   \rangle-exp \left( \frac{4 \pi i(d-k)(d-v-l)}{d}\right) \vert v+2 l-2d  \rangle\end{aligned}\\
        &=\begin{aligned}[t]
            &\left(exp \left( \frac{4 \pi ik(v+l)}{d}\right)-exp \left( \frac{4 \pi i(d-k)(d-v-l)}{d}\right)\right)\vert v+2 l \rangle\\
            &-\left(exp \left( \frac{4 \pi ik(l-v)}{d}\right)-exp \left( \frac{4 \pi i(d-k)(d+v-l)}{d}\right) \right)\vert v-2 l  \rangle 
        \end{aligned}\\
        &= \vert 0 \rangle
        \end{split}
\end{eqnarray*}

Therefore, $Q^{2}_{kl}+Q^{2}_{d-k,d-l}= 2 \mathbf{I} $  \\

\end{proof}

\begin{proposition} For any $d \in \mathbf{N}$ 
    $$    \sum_{n=1}^{d-1}  Q^{2}_{nk,nl} = (d-1) \mathbf{I} $$
\end{proposition}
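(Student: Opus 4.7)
The plan is to deduce this identity from Proposition \ref{p1} via a pairing argument on the summation index. Applying Proposition \ref{p1} with the substitution $(k, l) \mapsto (nk, nl)$ gives
\[ Q^2_{nk, nl} + Q^2_{d - nk, d - nl} = 2\, \mathbf{I} \]
for each $n \in \{1, 2, \ldots, d-1\}$. My goal is to recognize the second term as $Q^2_{(d-n)k, (d-n)l}$ and then exploit the bijection $n \mapsto d - n$ of $\{1, \ldots, d-1\}$ onto itself.

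The identification requires a short periodicity observation. Since $Z^d = X^d = \mathbf{I}$, a direct expansion of $D_{k,l} = \exp(-i\pi kl/d) Z^k X^l$ yields $D_{k+d, l} = (-1)^l D_{k, l}$ (and similarly in the second index), which gives $Q_{k+d, l} = (-1)^l Q_{k, l}$, so that $Q^2_{k,l}$ is genuinely periodic modulo $d$ in each index. Because $d - nk \equiv (d-n)k \pmod{d}$ and $d - nl \equiv (d-n)l \pmod{d}$, the identity above can be rewritten as
\[ Q^2_{nk, nl} + Q^2_{(d-n)k, (d-n)l} = 2\, \mathbf{I}. \]
Summing over $n = 1, 2, \ldots, d-1$ and then reindexing the second sum via $m = d - n$, the two sums coincide, yielding $2 \sum_{n=1}^{d-1} Q^2_{nk, nl} = 2(d-1)\, \mathbf{I}$, from which the claim follows.

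The main technical point is the periodicity step: Proposition \ref{p1} naturally produces indices $d - nk$ and $d - nl$, which as integers differ from $(d-n)k$ and $(d-n)l$ by multiples of $d$, so the pairing only closes once one passes to residues modulo $d$. Once periodicity of $Q^2$ is in hand, the argument is uniform in $d$ and avoids a case split between odd $d$ (where $n \mapsto d - n$ is fixed-point free on $\{1, \ldots, d-1\}$) and even $d$ (where $n = d/2$ would otherwise require the separate verification that $Q^2_{(d/2)k, (d/2)l} = \mathbf{I}$).
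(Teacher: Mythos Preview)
Your argument is correct. The paper takes a slightly more direct route: it expands $Q^{2}_{nk,nl}=\chi^{2}\bigl(D^{2}_{nk,nl}-D^{\dagger 2}_{nk,nl}\bigr)+\mathbf{I}$ (using $\chi\chi^{*}=\tfrac12$ and $\chi^{*2}=-\chi^{2}$), sums over $n$, and then simply asserts that $\sum_{n=1}^{d-1}\bigl(D^{2}_{nk,nl}-D^{\dagger 2}_{nk,nl}\bigr)=0$. The cancellation behind that assertion is exactly your pairing $n\leftrightarrow d-n$ combined with $D^{\dagger}_{p,q}=D_{-p,-q}$, so the two proofs rest on the same mechanism; you have just packaged it through Proposition~\ref{p1} instead of re-expanding from the definition. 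Your version has the merit of reusing the earlier proposition and of making the $n\mapsto d-n$ symmetry explicit, and your periodicity remark for $Q^{2}$ modulo $d$ handles the index bookkeeping that the paper leaves tacit.
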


\begin{proof}
From the definition of the operators $Q_{k,l}$, given  left hand side can be expressed as, \[\sum_{n=1}^{d-1} Q^2_{nk,nl}=\sum_{n=1}^{d-1}\chi^2\left(D^2_{nk,nl}-D^{\dagger 2}_{nk,nk}\right)+(d-1)\mathbf{I}\]
 as $\sum_{n=1}^{d-1}\chi^2\left(D^2_{nk,nl}-D^{\dagger 2}_{nk,nk}\right)$ is zero operator, therefore 
 \[\sum_{n=1}^{d-1} Q^2_{nk,nl}=(d-1)\mathbf{I}\]
\end{proof}

\subsection{Spectral properties:}

\begin{definition}
    Two operators are said to be isospectral if they share a common spectrum.
\end{definition}

\begin{proposition} For any k,l the traceless operators $(-1)^{k \ell} Q_{k\ell}$ are isospectral.

\end{proposition}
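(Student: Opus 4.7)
The plan is to determine the spectrum of $Q_{k,l}$ explicitly, then show that after the sign twist by $(-1)^{kl}$ the result is independent of $(k,l)$ (working, as in the earlier results of this section, with odd prime $d$ and $(k,l) \neq (0,0)$, for which $Q_{k,l}$ is indeed traceless). First I would analyse the unitary $D_{k,l}$: from $XZ = e^{-2\pi i/d}ZX$ induction yields the clean identity $D_{k,l}^{n} = D_{nk,nl}$ and in particular $D_{k,l}^{d} = (-1)^{kld}\,\mathbf{I}$, equal to $(-1)^{kl}\,\mathbf{I}$ for $d$ odd. Together with the observation that $\operatorname{Tr}(D_{k',l'}) = 0$ whenever $(k',l')\not\equiv(0,0)\pmod d$, this forces $\operatorname{Tr}(D_{k,l}^{m}) = 0$ for every $m = 1, \ldots, d-1$.

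Writing the eigenvalues of $D_{k,l}$ as a multiset supported on the $d$ distinct $d$-th roots $\{\zeta_j\}_{j=0}^{d-1}$ of $(-1)^{kl}$ with multiplicities $(a_0, \ldots, a_{d-1})$, the trace conditions above say that the discrete Fourier transform of $(a_j)$ vanishes at every nonzero frequency; hence $a_j \equiv 1$ and the spectrum of $D_{k,l}$ is exactly the full set of $d$-th roots of $(-1)^{kl}$. Since $D_{k,l}$ is normal and $Q_{k,l} = \chi D_{k,l} + \bar\chi D_{k,l}^{\dagger}$, the spectrum of $Q_{k,l}$ equals $\{2\operatorname{Re}(\chi \mu) : \mu \in \operatorname{spec}(D_{k,l})\}$. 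Taking $\chi = (1+i)/2 = e^{i\pi/4}/\sqrt{2}$ and $\mu = e^{i\theta}$, each eigenvalue of $Q_{k,l}$ has the form $\sqrt{2}\cos(\theta + \pi/4)$: for $kl$ even the angles $\theta$ run through $2\pi j/d$, and for $kl$ odd through $\pi(2j+1)/d$.

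Finally, multiplication by $(-1)^{kl}$ leaves the $kl$-even case untouched, while in the $kl$-odd case the identity $-\cos\alpha = \cos(\alpha + \pi)$ gives
\begin{equation*}
-\sqrt{2}\cos\!\Bigl(\tfrac{\pi(2j+1)}{d} + \tfrac{\pi}{4}\Bigr) = \sqrt{2}\cos\!\Bigl(\tfrac{\pi(2j+1+d)}{d} + \tfrac{\pi}{4}\Bigr),
\end{equation*}
and since $d$ is odd, $2j+1+d$ is even; writing $2j+1+d = 2j'$ with $j' = j + (d+1)/2 \pmod d$ (a bijection on $\{0,\ldots,d-1\}$) converts this back to $\sqrt{2}\cos(2\pi j'/d + \pi/4)$. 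Both cases therefore yield the common spectrum $\{\sqrt{2}\cos(2\pi j/d + \pi/4) : j = 0, \ldots, d-1\}$, which is manifestly independent of $(k,l)$, proving isospectrality. The main subtlety I anticipate is the multiplicity-one step in the second paragraph, upgrading ``each eigenvalue of $D_{k,l}$ is some $d$-th root of $(-1)^{kl}$'' to ``each such root occurs exactly once'', which is precisely where the vanishing of all intermediate traces $\operatorname{Tr}(D_{k,l}^m)$ (and primality of $d$) must be used; the parity shuffle at the end is the only other place where oddness of $d$ is indispensable.
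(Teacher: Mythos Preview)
Your argument is correct and follows the same route as the paper: compute the $d$-th power of the displacement operator to pin down the spectrum of $D_{k,l}$, then use normality to pass to the spectrum of $Q_{k,l}$. The paper performs the sign twist already at the level of $D_{k,l}$ (asserting in a corollary that $(-1)^{kl}D_{k,l}$ has the universal spectrum $\{1,\omega,\ldots,\omega^{d-1}\}$) and then lifts to $Q$ in one line via $(-1)^{kl}Q_{k,l}=\chi\bigl[(-1)^{kl}D_{k,l}\bigr]+\chi^{*}\bigl[(-1)^{kl}D_{k,l}\bigr]^{\dagger}$; you instead compute the spectrum of $Q_{k,l}$ first and carry out the sign twist via an explicit index bijection $j\mapsto j+(d+1)/2$ at the end. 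This is a cosmetic reordering of the same idea. The one place where you are genuinely more careful than the paper is the multiplicity-one step: the paper's corollary simply states the spectrum of $(-1)^{kl}D_{k,l}$ without addressing why each $d$-th root of unity appears exactly once, whereas your trace/DFT argument (which is precisely where primality of $d$ enters, since one needs $(mk,ml)\not\equiv(0,0)\pmod d$ for $1\le m\le d-1$) fills this gap.
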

Before proving the proposition let us prove the following lemma first.\\

\begin{lemma} If $d$ is even
\begin{eqnarray*}
    (Z^k X^\ell)^d = (-1)^{k \ell} \mathbf{I} ,
\end{eqnarray*}
and if $d$ is odd
\begin{eqnarray*}
    (Z^k X^\ell)^d =  \mathbf{I} ,
\end{eqnarray*}

\end{lemma}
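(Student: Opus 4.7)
The plan is to reduce $(Z^k X^\ell)^d$ to a scalar by repeatedly commuting every $X$ past every $Z$, using the basic braiding relation between the shift and phase operators, and then to read off the scalar by separating the parities of $d$.

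First I would establish the commutation rule. A direct computation on a basis vector $|v\rangle$ gives $ZX|v\rangle = e^{2\pi i(v+1)/d}|v+1\rangle$ while $XZ|v\rangle = e^{2\pi iv/d}|v+1\rangle$, so $X Z = e^{-2\pi i/d} Z X$. Iterating yields the general braiding identity
\begin{equation*}
X^{\ell} Z^{k} = e^{-2\pi i\, k\ell/d}\, Z^{k} X^{\ell}.
\end{equation*}
With $\omega = e^{2\pi i/d}$, I would next prove by induction on $n$ the normal-form identity
\begin{equation*}
(Z^{k} X^{\ell})^{n} = \omega^{-k\ell\binom{n}{2}}\, Z^{nk} X^{n\ell},
\end{equation*}
the inductive step following because moving $X^{(n-1)\ell}$ past $Z^{k}$ picks up the phase $\omega^{-(n-1)k\ell}$, and $0+1+\cdots+(n-1) = \binom{n}{2}$.

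Setting $n=d$ and using $Z^{d}=X^{d}=\mathbf{I}$ collapses the right-hand side to the scalar $\omega^{-k\ell\binom{d}{2}} = \omega^{-k\ell d(d-1)/2}$. The core observation is then a parity split. If $d$ is odd, then $(d-1)/2$ is an integer, so the exponent is a multiple of $d$ and $\omega^{-k\ell d(d-1)/2} = 1$, giving $(Z^k X^\ell)^d = \mathbf{I}$. If $d$ is even, then $d/2$ is an integer, and
\begin{equation*}
\omega^{-k\ell d(d-1)/2} = e^{-\pi i\, k\ell(d-1)} = (-1)^{k\ell(d-1)} = (-1)^{k\ell},
\end{equation*}
where the last step uses that $d-1$ is odd. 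This yields $(Z^k X^\ell)^d = (-1)^{k\ell}\mathbf{I}$, completing both cases.

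The only mildly delicate point is the bookkeeping of the accumulated phase in the induction: one must be careful that each time the block $X^{\ell}$ passes a $Z^{k}$ to reach its final position, it is the \emph{already accumulated} power $X^{(n-1)\ell}$ (not just $X^{\ell}$) that contributes the exponent, which is exactly what produces the triangular sum $\binom{n}{2}$. Once that is pinned down, the rest is a routine simplification of roots of unity, and the case split on parity of $d$ matches the two formulas stated in the lemma.
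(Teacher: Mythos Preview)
Your proof is correct and follows essentially the same route as the paper: both arguments repeatedly use the braiding relation $X^{\ell}Z^{k}=\omega^{-k\ell}Z^{k}X^{\ell}$ to push all $X$'s to the right, accumulate the triangular phase $\omega^{-k\ell\binom{d}{2}}=e^{-\pi i(d-1)k\ell}$, and then split on the parity of $d$. The only difference is cosmetic: you package the repeated commutation as a closed-form induction $(Z^{k}X^{\ell})^{n}=\omega^{-k\ell\binom{n}{2}}Z^{nk}X^{n\ell}$, whereas the paper unrolls the same computation step by step.
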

\begin{proof}
Let us note that,\\
\begin{eqnarray*} 
        \begin{split}
(Z^k X^\ell)^d
    &=(Z^k X^\ell)(Z^k X^\ell)...(\text{d times})...(Z^k X^\ell)
\\[1ex]
 &=Z^k\left((X^lZ^k)(X^lZ^k)...(\text{d-1 times})..(X^lZ^k)\right)X^l
\\[1ex]
&=
 exp \left( \frac{-2 \pi ikl (d-1)}{d}\right) Z^k \left((Z^kX^l)(Z^kX^l)....(\text{d-1 times})...(Z^kX^l)\right)X^l\\[1ex]
&= exp \left( \frac{-2 \pi ikl (d-1)}{d}\right) Z^{2k} \left( (X^lZ^k)(X^lZ^k)...(\text{d-2 times})...(X^lZ^k)\right)X^{2l}\\[1ex]
&=exp \left( \frac{-2 \pi ikl (d-1)}{d}\right)exp \left( \frac{-2 \pi ikl (d-2)}{d}\right)Z^{2k}\left((Z^kX^l)(Z^kX^l)...(\text{d-2 times})...(Z^kX^l) \right)X^{2l}\\[1ex]
&=exp \left( \frac{-2 \pi ikl \left((d-1)+(d-2)\right)}{d}\right) Z^{3k} \left( (X^lZ^k)(X^lZ^k)...(\text{d-3 times})...(X^lZ^k)\right)X^{3l}
\end{split}
\end{eqnarray*}
Thus using the relationship $Z^kX^l=exp \left( \frac{2 \pi ikl}{d}\right)X^lZ^k$ repeatedly we obtain\\

\begin{eqnarray*}\begin{split}
    \left(Z^kX^l \right)^d=&exp \left( \frac{-2 \pi ikl \left((d-1)+(d-2)+...+1 \right)}{d}\right) Z^{dk}X^{dl}\\[1ex]
  &=exp \left( \frac{-2 \pi ikl ~~d(d-1)}{2d}\right)\mathbf{I}\\[1ex]
  &=exp \left( - \pi i (d-1) k l\right)\mathbf{I}
    \end{split}
\end{eqnarray*}

Therefore

\begin{eqnarray*}
    (Z^k X^\ell)^d = (-1)^{k \ell} \mathbf{I} ,
\end{eqnarray*} if $d$ is even
and 
\begin{eqnarray*}
    (Z^k X^\ell)^d = \mathbf{I} ,
\end{eqnarray*}
if $d$ is odd.


\end{proof}

\vspace{1cm}

\begin{corollary} Let us denote $\omega=exp \left( \frac{2\pi i}{d}\right)$. For any k,l the spectrum of $(-1)^{k\ell} D_{k\ell}$  are given by,

\begin{eqnarray*}
    \sigma((-1)^{k\ell} D_{k\ell}) = \{1,\omega,\ldots,\omega^{d-1} \} \ , 
\end{eqnarray*}
that is, $(-1)^{k\ell} D_{k\ell}$ are isospectral.    
\end{corollary}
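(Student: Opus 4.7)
The plan is a two-part argument. First, I would show that $(-1)^{k\ell} D_{k,\ell}$ satisfies $X^{d} = \mathbf{I}$, forcing its spectrum into the $d$-th roots of unity. Second, I would use the vanishing traces of nontrivial displacement operators to pin down that each root appears exactly once.

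For the first part, I would substitute $D_{k,\ell} = e^{-i\pi k\ell/d}\, Z^{k} X^{\ell}$ into the $d$-th power:
\[
D_{k,\ell}^{d} \;=\; e^{-i\pi k\ell}\,(Z^{k} X^{\ell})^{d} \;=\; (-1)^{k\ell}\,(Z^{k} X^{\ell})^{d}.
\]
The preceding lemma gives $(Z^{k}X^{\ell})^{d} = (-1)^{k\ell}\mathbf{I}$ for even $d$ and $\mathbf{I}$ for odd $d$; a small parity check in either case then yields $\bigl((-1)^{k\ell} D_{k,\ell}\bigr)^{d} = \mathbf{I}$. Since $D_{k,\ell}$ is unitary (product of a phase and a unitary), so is $(-1)^{k\ell} D_{k,\ell}$; its minimal polynomial therefore divides $x^{d}-1$, and the spectrum is contained in $\{1,\omega,\ldots,\omega^{d-1}\}$.

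For the second part, I would iterate the fundamental commutation relation $Z^{k}X^{\ell} = \omega^{k\ell} X^{\ell} Z^{k}$ (exactly as in the proof of the lemma) to obtain $D_{k,\ell}^{n} = D_{nk,n\ell}$, and invoke the elementary fact that $\mathrm{Tr}(Z^{a}X^{b}) = 0$ whenever $(a,b)\not\equiv(0,0)\bmod d$. For prime $d$ and $(k,\ell)\not\equiv(0,0)$, this gives $\mathrm{Tr}\bigl(((-1)^{k\ell}D_{k,\ell})^{n}\bigr) = 0$ for every $n = 1, \ldots, d-1$. Writing $m_{j}$ for the multiplicity of $\omega^{j}$ in the spectrum, the relations $\sum_{j} m_{j}\omega^{jn} = 0$ together with $\sum_{j} m_{j} = d$ amount to saying that the discrete Fourier transform of $(m_{0},\ldots,m_{d-1})$ vanishes at every nonzero frequency, forcing $m_{j} = 1$ for every $j$. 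Hence the spectrum of $(-1)^{k\ell} D_{k,\ell}$ is precisely $\{1,\omega,\ldots,\omega^{d-1}\}$, independent of $(k,\ell)$, which is the isospectrality claim.

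The main obstacle is the second part. The identity $X^{d} = \mathbf{I}$ alone only localises the spectrum inside the $d$-th roots of unity; it says nothing about which roots appear or with what multiplicity, and in general a unitary with $X^{d} = \mathbf{I}$ can have a proper subset as its spectrum with arbitrary multiplicities. Combining the composition rule $D_{k,\ell}^{n} = D_{nk,n\ell}$ with the tracelessness of nontrivial displacements produces enough power-sum identities to apply discrete Fourier inversion, and this is what upgrades set-containment to set-equality and rules out repeated eigenvalues.
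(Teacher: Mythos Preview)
Your argument is sound and in fact more complete than the paper's. The paper offers no proof for this corollary beyond positioning it immediately after the lemma $(Z^{k}X^{\ell})^{d}=(-1)^{(d-1)k\ell}\mathbf{I}$; it tacitly treats the containment $\sigma\subseteq\{1,\omega,\ldots,\omega^{d-1}\}$ coming from your first step as if it were equality. You are right that this step alone is insufficient, and your power-sum/DFT argument via $\mathrm{Tr}(D_{k,\ell}^{n})=0$ is exactly what is needed to upgrade containment to equality with simple spectrum. The one difference is that your argument buys a correct statement at the price of the hypothesis ``$d$ prime and $(k,\ell)\not\equiv(0,0)$'': the paper does not impose this, but it is in fact necessary, since for composite $d$ the claim can fail (e.g.\ $d=4$, $(k,\ell)=(2,0)$ gives $D_{2,0}=Z^{2}$ with spectrum $\{1,-1\}$ each of multiplicity two). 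So your restriction is not a weakness of the method but a repair of the statement.
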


\begin{corollary} Since $(-1)^{k\ell} D_{k\ell}$ is normal (being unitary), one concludes that $(-1)^{k\ell} Q_{k\ell}$
are isospectral.
\end{corollary}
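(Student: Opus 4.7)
The plan is to reduce the claim to the previous corollary via the continuous functional calculus for normal operators. The key observation is that since $(-1)^{k\ell}$ is a real scalar, it commutes with adjunction, so writing $\tilde{D}_{k\ell} := (-1)^{k\ell} D_{k\ell}$ one has
\begin{equation*}
(-1)^{k\ell} Q_{k\ell} \;=\; \chi\, \tilde{D}_{k\ell} + \chi^{*}\, \tilde{D}_{k\ell}^{\dagger}.
\end{equation*}
Thus the problem reduces to showing that the right-hand side has a spectrum that does not depend on the pair $(k,\ell)$.

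The previous corollary tells us that each $\tilde{D}_{k\ell}$ is unitary with common spectrum $\{1,\omega,\ldots,\omega^{d-1}\}$; since these are $d$ distinct eigenvalues on a $d$-dimensional space, each appears with multiplicity one. The spectral theorem for normal operators therefore provides a unitary $U_{k\ell}$ (depending on $k,\ell$) with $\tilde{D}_{k\ell} = U_{k\ell}\Lambda U_{k\ell}^{\dagger}$, where $\Lambda := \mathrm{diag}(1,\omega,\ldots,\omega^{d-1})$ is a single fixed diagonal matrix. Conjugating through the defining expression yields
\begin{equation*}
(-1)^{k\ell} Q_{k\ell} \;=\; U_{k\ell}\bigl(\chi \Lambda + \chi^{*} \overline{\Lambda}\bigr) U_{k\ell}^{\dagger},
\end{equation*}
exhibiting $(-1)^{k\ell} Q_{k\ell}$ as unitarily equivalent to the fixed Hermitian diagonal operator $\chi \Lambda + \chi^{*}\overline{\Lambda}$. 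Isospectrality is then immediate, with the common spectrum reading $\{2\,\mathrm{Re}(\chi\, \omega^{j}):j=0,1,\ldots,d-1\}$.

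There is no substantive obstacle to this argument: it is a direct application of the functional calculus to a family of normal operators that have already been shown to be isospectral, and normality is exactly the hypothesis that lets us transport the spectral decomposition of $\tilde{D}_{k\ell}$ to one of $(-1)^{k\ell}Q_{k\ell}$. The only point requiring a brief comment is the multiplicity-one observation, which however follows from a dimension count once the full set $\{1,\omega,\ldots,\omega^{d-1}\}$ of $d$ distinct eigenvalues is realized.
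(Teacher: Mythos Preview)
Your proof is correct and is precisely the argument the paper is gesturing at: the corollary in the paper is stated without a proof environment and simply invokes normality of $(-1)^{k\ell}D_{k\ell}$, leaving the spectral-theorem step implicit. You have made that step explicit by diagonalizing each $\tilde{D}_{k\ell}$ against the fixed model $\Lambda=\mathrm{diag}(1,\omega,\ldots,\omega^{d-1})$ and transporting through the map $z\mapsto \chi z+\chi^{*}\bar z$, which is exactly the intended use of normality here.
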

 \begin{proposition} For any prime d, all operators

$$    \sum_{n=1}^{d-1} (-1)^{k \ell} Q_{nk,n\ell}$$
are isospectral.
\end{proposition}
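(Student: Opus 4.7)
The plan is to simultaneously diagonalize the commuting family $\{D_{nk, n\ell}\}_{n=1}^{d-1}$, read off the eigenvalues of each $Q_{nk, n\ell}$ in that common basis, and show that after the appropriate sign twist the summed spectrum is independent of $(k,\ell)$. The starting observation is that a direct application of the Weyl relation $X^\ell Z^k = \omega^{-k\ell} Z^k X^\ell$ to the definition $D_{k,\ell} = e^{-i\pi k\ell/d} Z^k X^\ell$ gives $D_{nk, n\ell} = D_{k,\ell}^n$, so the whole family consists of powers of a single unitary and therefore admits a common orthonormal eigenbasis $\{|\psi_j\rangle\}_{j=0}^{d-1}$. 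By the preceding corollary these eigenvalues can be labeled as $D_{k,\ell}|\psi_j\rangle = (-1)^{k\ell}\omega^j|\psi_j\rangle$, hence
\[
D_{nk, n\ell}|\psi_j\rangle = (-1)^{nk\ell}\omega^{nj}|\psi_j\rangle, \qquad Q_{nk, n\ell}|\psi_j\rangle = (-1)^{nk\ell}\bigl(\chi\omega^{nj} + \chi^*\omega^{-nj}\bigr)|\psi_j\rangle.
\]

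Next I would interpret the sign appearing in the statement as the per-term factor $(-1)^{(nk)(n\ell)} = (-1)^{n^2 k\ell}$, so that combined with the $(-1)^{nk\ell}$ produced by the spectrum the total sign is $(-1)^{n(n+1)k\ell} = 1$ (since $n(n+1)$ is always even). This removes the $(k,\ell)$-dependence and reduces each summand to $\chi\omega^{nj} + \chi^*\omega^{-nj}$. Summing over $n = 1, \ldots, d-1$ and invoking the primality of $d$ through the standard character identity $\sum_{n=0}^{d-1}\omega^{nj} = d\,\delta_{j,0}$ (which requires $\gcd(j,d)=1$ for every $j \neq 0$, automatic when $d$ is prime), together with $\chi + \chi^* = 1$, yields eigenvalue $d-1$ on $|\psi_0\rangle$ and $-1$ on every other $|\psi_j\rangle$. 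Hence each operator in the family is unitarily equivalent to $\mathrm{diag}(d-1, -1, \ldots, -1)$, a spectrum that does not depend on $(k,\ell)$.

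The main obstacle I anticipate is the bookkeeping of parity signs: one must reconcile the $(-1)^{k\ell}$ written in the statement with the $(-1)^{nk\ell}$ produced by the spectrum of $D_{nk, n\ell}$, so that their combination is trivial uniformly in $n$. Once the cancellation $n(n+1)k\ell \equiv 0 \pmod 2$ is recorded, the remainder is a routine character computation on $\mathbf{Z}/d\mathbf{Z}$, and primality of $d$ enters only in that last step to guarantee that $nj$ runs through a complete nonzero residue system whenever $j \neq 0$.
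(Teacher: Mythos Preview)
The paper's own proof is a single sentence: it invokes the mutual commutativity of $\{Q_{nk,n\ell}:n=1,\ldots,d-1\}$ for prime $d$ (Proposition~3.1) and then declares that ``the result follows.'' This is not really an argument---commuting operators that are individually isospectral can still sum to operators with different spectra---so your plan to diagonalize simultaneously and compute the eigenvalues explicitly is exactly the work needed to make the claim precise, and it goes well beyond what the paper records.

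There are, however, two genuine issues in your execution. First, the identity $D_{nk,n\ell}=D_{k,\ell}^{\,n}$ is delicate. A direct computation using $X^{\ell}Z^{k}=\omega^{-k\ell}Z^{k}X^{\ell}$ gives
\[
D_{k,\ell}^{\,n}=e^{-i\pi n^{2}k\ell/d}\,Z^{nk}X^{n\ell},
\]
which coincides with $D_{nk,n\ell}=e^{-i\pi(nk)(n\ell)/d}Z^{nk}X^{n\ell}$ only if the product $(nk)(n\ell)$ in the phase is \emph{not} reduced modulo $d$. The paper explicitly adopts the modulo-$d$ convention, and under it the two differ by a sign whenever $n^{2}k\ell$ and $n^{2}k\ell\bmod d$ have opposite parity (e.g.\ $d=3$, $k=\ell=1$, $n=2$). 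This stray $\pm 1$ feeds into your eigenvalue formula for $D_{nk,n\ell}$, and without tracking it the ``clean'' reduction to $\sum_{n}(\chi\omega^{nj}+\chi^{*}\omega^{-nj})$ does not follow.

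Second, the factor $(-1)^{k\ell}$ in the statement is most naturally read as a single overall sign (the indices $k,\ell$ are fixed outside the sum), not as the per-term twist $(-1)^{(nk)(n\ell)}$ you substitute. With the per-term reading and the paper's phase convention one checks directly in $d=3$, $(k,\ell)=(1,1)$ that $-Q_{1,1}+Q_{2,2}$ has an eigenvalue $-\sqrt{3}$, so the operators are not isospectral under that interpretation. Your final answer for the spectrum, $\{d-1,-1,\ldots,-1\}$, is correct for the constant-sign reading, and your overall route (explicit joint diagonalization plus a character sum on $\mathbf{Z}/d\mathbf{Z}$) is the right one; but both of these bookkeeping points need to be reconciled with the paper's conventions before the argument is complete.
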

\begin{proof}
    If d is prime then the operators $\lbrace Q_{nk,nl}: n=1,2,...,d-1 \rbrace$ commute with each other. Therefore the result follows.
\end{proof}

\section{ Heisenberg-Weyl maps  } \label{IV}
In this section we discuss linear maps arising from H.W. observables. Pauli maps specially Pauli channels on $B\left(\mathbf{C}^2\right)$ are very well studied and H.W. observables provide a generalization of Pauli operators. Let us discuss unitality condition for H.W. maps acting on d-dimensional systems.  \\

Let us consider a linear map $\Lambda : B\left(\mathbf{C}^d\right) \rightarrow B\left(\mathbf{C}^d\right)$ defined as 
\begin{equation}\label{H.W. map}
    \Lambda (Y) = \sum_{k,l=0}^{d-1} p_{k,l}~ Q_{k,l} Y~ Q_{k,l}
\end{equation}
\begin{proposition}\label{unital}
    Given any $d \in \mathbf{N}$ if $p_{k_1,l_1}=p_{k_2,l_2}$ where $k1\oplus k2= 0~~ \left(\text{mod}~~ d\right)$ and  $l1\oplus l2=0 ~~\left(\text{mod}~~ d\right)$ and $p_{0,0}=1-\left( \sum_{k,l: (k,l)\neq (0,0)} p_{k,l} \right)$, then the linear map $\Lambda$ is unital. \\
\end{proposition}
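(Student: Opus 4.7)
Unitality means $\Lambda(\mathbf{I}) = \mathbf{I}$, so the plan is to compute
\begin{equation*}
\Lambda(\mathbf{I}) \;=\; \sum_{k,l=0}^{d-1} p_{k,l}\, Q_{k,l}^{2}
\end{equation*}
and show it equals $\mathbf{I}$. The main engine is \cref{p1}, which gives $Q_{k,l}^{2} + Q_{d-k,d-l}^{2} = 2\mathbf{I}$, so the strategy is to pair each index $(k,l)$ with its ``negative'' $(d-k \bmod d,\, d-l \bmod d)$ and use the hypothesis $p_{k,l} = p_{d-k,d-l}$.

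First I would isolate the $(0,0)$ term, noting $D_{0,0} = \mathbf{I}$, hence $Q_{0,0} = (\chi + \chi^{*})\mathbf{I} = \mathbf{I}$, so $p_{0,0} Q_{0,0}^{2} = p_{0,0} \mathbf{I}$. Then partition the remaining index set $\{(k,l) \neq (0,0)\}$ into orbits under the involution $\tau:(k,l)\mapsto(d-k,d-l) \pmod d$. Orbits are of size one or two.

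For a size-two orbit $\{(k,l),(d-k,d-l)\}$, the contribution is
\begin{equation*}
p_{k,l} Q_{k,l}^{2} + p_{d-k,d-l} Q_{d-k,d-l}^{2} \;=\; p_{k,l}\bigl(Q_{k,l}^{2} + Q_{d-k,d-l}^{2}\bigr) \;=\; 2 p_{k,l}\, \mathbf{I},
\end{equation*}
using the hypothesis and \cref{p1}. Since $2p_{k,l} = p_{k,l} + p_{d-k,d-l}$, this equals (sum of the two $p$'s) $\cdot\, \mathbf{I}$. For a fixed-point orbit (possible only when $d$ is even, at indices $(k,l)$ with $2k \equiv 2l \equiv 0 \pmod d$), applying \cref{p1} gives $2 Q_{k,l}^{2} = 2\mathbf{I}$, hence $Q_{k,l}^{2} = \mathbf{I}$, so the contribution is $p_{k,l}\mathbf{I}$. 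Either way, each orbit contributes exactly $\sum_{(k,l) \in O} p_{k,l}\, \mathbf{I}$.

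Summing over orbits and adding the $(0,0)$ term yields $\Lambda(\mathbf{I}) = \bigl(\sum_{k,l} p_{k,l}\bigr)\mathbf{I}$, and the normalization $p_{0,0} = 1 - \sum_{(k,l)\neq(0,0)} p_{k,l}$ gives $\Lambda(\mathbf{I}) = \mathbf{I}$. I do not anticipate a genuine obstacle here; the only point requiring mild care is the $d$-even case, where self-paired indices $(d/2,0)$, $(0,d/2)$, $(d/2,d/2)$ must be handled separately, but \cref{p1} still delivers $Q_{k,l}^{2} = \mathbf{I}$ on them, so the bookkeeping closes cleanly.
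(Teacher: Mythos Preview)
Your proposal is correct and follows essentially the same approach as the paper: compute $\Lambda(\mathbf{I})=\sum p_{k,l}Q_{k,l}^{2}$, pair each $(k,l)\neq(0,0)$ with $(d-k,d-l)$, invoke \cref{p1} on each pair, and handle the self-paired indices for even $d$ separately before applying the normalization of $p_{0,0}$. Your orbit-language packaging is slightly cleaner than the paper's explicit odd/even case split, but the content is identical.
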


\begin{proof}
    Given the definition of $\Lambda$, one has
    \[\Lambda(\mathbf{I})=\sum_{k,l=0}^{d-1} p_{k,l}~ Q^2_{k,l} \]
    \textbf{Case 1: d is odd;}
    \[\Lambda(\mathbf{I})=p_{0,0} \mathbf{I}+\sum_{k,l} p_{k,l}~ Q^2_{k,l} ~~\text{where}~~ Q_{0,0}=\mathbf{I} \]
    where the summation contains $d^2-1$ terms. These terms can be grouped into $\frac{d^2-1}{2}$ pairs by identifying \[p_{k_1,l_1}=p_{k_2,l_2} ~~\text{where}~~ k1\oplus k2= 0~~ \left(\text{mod}~~ d\right)~~ \text{and}~~  l1\oplus l2=0 ~~\left(\text{mod}~~ d\right)\].
    Now we use the proposition \ref{p1} which tells $   Q_{k,l}^2 + Q_{k',l'}^2 = 2 \mathbf{I} . $ with $k\oplus k'= 0~~ \left(\text{mod}~~ d\right)$ and  $l\oplus l'=0 ~~\left(\text{mod}~~ d\right)$. Finally identifying $p_{0,0}=1-\left( \sum_{k,l: (k,l)\neq (0,0)} p_{k,l} \right)$, we obtain $\Lambda(\mathbf{I})=\mathbf{I}$\\

\textbf{Case 2: d is even;}

When d is even, we write 
\[\Lambda(\mathbf{I})=\sum_{k,l=0}^{d-1} p_{k,l}~ Q^2_{k,l}=p_{0,0} \mathbf{I} + p_{0,1} Q^2_{0,1}+...+p_{0,\frac{d}{2}} Q^2_{0,\frac{d}{2}}+....+p_{0,d-1} Q^2_{0,d-1} \]
\[+p_{1,0} Q^2_{1,0}+...+p_{\frac{d}{2},0} Q^2_{\frac{d}{2},0}+...+p_{d-1,0} Q^2_{d-1,0}+ \sum_{k,l} p_{k,l}~ Q^2_{k,l}\]
    Note that \[Q^2_{0,\frac{d}{2}}=Q^2_{\frac{d}{2},0}=Q^2_{\frac{d}{2},\frac{d}{2}}=\mathbf{I}\].

    Then identifying $p_{k,l}=p_{k',l'}~~ \text{where}~~k\oplus k'= 0~~ \left(\text{mod}~~ d\right)$ and  $l\oplus l'=0 ~~\left(\text{mod}~~ d\right) ~~ \text{and}~~ p_{0,0}=1-\left( \sum_{k,l: (k,l)\neq (0,0)} p_{k,l} \right)$ we get $\Lambda(\mathbf{I})=\mathbf{I}$.
\end{proof}
\vspace{1cm}

\textbf{Remark:} The above conditions for unitality of the map $\Lambda$ is sufficient only, not necessary one.\\

For example let us consider $d=4$ and the map
\begin{equation} \label{H.W.4}
    \Lambda(Y)=\sum^3_{k,l=0} p_{k,l}~Q_{k,l} Y Q_{k,l}
\end{equation}

It can be shown that 
\[\Lambda(\mathbf{I})=\left( \sum_{k,l} p_{k,l} \right) \mathbf{I}\]

Therefore, putting $p_{0,0}=1-\left( \sum_{k,l: (k,l)\neq (0,0)} p_{k,l} \right)$ we obtain unitality but it does not necessarily mean $p_{k,l}=p_{k',l'}~~ \text{where}~~k\oplus k'= 0~~ \left(\text{mod}~~ d\right)$ and $   l\oplus l'=0 ~~\left(\text{mod}~~ d\right)$. In fact $p_{0,0}=1-\left( \sum_{k,l: (k,l)\neq (0,0)} p_{k,l} \right)$  is the necessary and sufficient condition for unitality of the map defined on (\ref{H.W.4}).\\

\textbf{Remark:} The above unitality condition allows further refinement in a set of mutually commuting H.W. observables. If dimension of the system is d is prime then each set of mutually commuting H.W. observables consists of $d-1$ operators. Let us consider a subset consisting of $\lbrace Q_{k_1,l_1}, Q_{k_2,l_2}\rbrace$ with $k_1 \oplus k_2=d$ and $l_1 \oplus l_2 =d$. Since the operations here are being carried out under congruence modulo d, we can rewrite the same subset in the form $\lbrace Q_{k,l}, Q_{-k,-l}\rbrace$. 
Then for each set of commuting H.W. observables we have $\frac{d-1}{2}$ such subsets.\\

\begin{theorem}\label{theorem1}
    Given any $d \in \mathbb{N} $,  let us consider two linear maps $\Psi_1: B\left(\mathbf{C}^d\right) \rightarrow B\left(\mathbf{C}^d\right) $ and $\Psi_2:B\left(\mathbf{C}^d\right) \rightarrow B\left(\mathbf{C}^d\right)$ given by
        \[\Psi_1 (Y)=Q_{k_1,l_1} Y Q_{k_1,l_1} + Q_{-k_1,-l_1} Y Q_{-k_1,-l_1} ~~\text{and}~~\Psi_2 (Y)=Q_{k_2,l_2} Y Q_{k_2,l_2} + Q_{-k_2,-l_2} Y Q_{-k_2,-l_2}\]
    Then
    \begin{center}
        $\Psi_1 \circ \Psi_2 = \Psi_2 \circ \Psi_1$
    \end{center}
\end{theorem}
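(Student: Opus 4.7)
My plan is first to rewrite $\Psi_i(Y)$ in a cleaner form by eliminating the $Q$ operators in favor of the underlying displacement operators. Recall $Q_{k,l}=\chi D_{k,l}+\chi^*D_{k,l}^\dagger$ with $\chi=(1\pm i)/2$, so $|\chi|^2=1/2$ and (crucially) $\chi^2+\chi^{*2}=0$. A direct check using $D_{-k,-l}=D_{k,l}^\dagger$ (which follows from the explicit form $D_{k,l}=e^{-\ii \pi kl/d}Z^k X^l$ together with $ZX=\omega XZ$) gives $Q_{-k,-l}=\chi D_{k,l}^\dagger+\chi^*D_{k,l}$. Expanding $Q_{k,l}YQ_{k,l}+Q_{-k,-l}YQ_{-k,-l}$ and collecting like terms, the coefficient of $D_{k,l}YD_{k,l}$ and $D_{k,l}^\dagger YD_{k,l}^\dagger$ is $\chi^2+\chi^{*2}=0$, while the cross terms $D_{k,l}YD_{k,l}^\dagger$ and $D_{k,l}^\dagger YD_{k,l}$ pick up coefficient $2|\chi|^2=1$. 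Hence
\[
\Psi_i(Y)=D_{k_i,l_i}\,Y\,D_{k_i,l_i}^\dagger + D_{k_i,l_i}^\dagger\,Y\,D_{k_i,l_i}.
\]
This is just a two-Kraus CP map with Kraus operators $\{D_i,D_i^\dagger\}$, writing $D_i:=D_{k_i,l_i}$.

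Next, I would establish the basic commutation relation between displacement operators. From $W_{k_1,l_1}W_{k_2,l_2}=\exp(2\pi\ii k_1 l_2/d)\,W_{k_1+k_2,l_1+l_2}$ and the fact that $D$ differs from $W$ only by a scalar phase depending on $(k,l)$, one obtains
\[
D_1D_2=\omega^{k_1 l_2-k_2 l_1}\,D_2D_1,\qquad \omega=e^{2\pi\ii/d}.
\]
Taking inverses and adjoints (and using unitarity) yields the general formula $D_1^{a}D_2^{b}=\omega^{ab(k_1 l_2-k_2 l_1)}\,D_2^{b}D_1^{a}$ for $a,b\in\{+1,-1\}$.

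Finally, I would expand both compositions and match term by term. Writing $\Psi_2\circ\Psi_1(Y)$ out gives four summands $D_2^{b}D_1^{a}YD_1^{-a}D_2^{-b}$ as $a,b$ range over $\{\pm 1\}$; similarly $\Psi_1\circ\Psi_2(Y)$ gives the four summands $D_1^{a}D_2^{b}YD_2^{-b}D_1^{-a}$. Applying the commutation formula on both sides of $Y$ converts $D_1^{a}D_2^{b}$ into $\omega^{ab(k_1 l_2-k_2 l_1)}D_2^{b}D_1^{a}$ and $D_2^{-b}D_1^{-a}$ into $\omega^{-ab(k_1 l_2-k_2 l_1)}D_1^{-a}D_2^{-b}$; the two phase factors are complex conjugate and cancel, giving equality of the corresponding summand in $\Psi_2\circ\Psi_1(Y)$. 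The core observation is that conjugation by a unitary is insensitive to unit-modulus scalar factors, so even though $D_1$ and $D_2$ themselves do not commute in general, the induced superoperators do. The only point that requires care is the $\chi$-algebra in the first step — in particular the identity $\chi^2+\chi^{*2}=0$, which is what collapses $\Psi_i$ to the clean two-Kraus form and which is the real reason the theorem holds for \emph{arbitrary} $d\in\mathbb{N}$ rather than just prime $d$.
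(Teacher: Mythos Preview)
Your proof is correct and takes a cleaner route than the paper's. The key difference is your first step: you exploit the identity $\chi^2+\chi^{*2}=0$ (together with $D_{-k,-l}=D_{k,l}^\dagger$) to collapse each $\Psi_i$ to the simple two-Kraus form $\Psi_i(Y)=D_iYD_i^\dagger+D_i^\dagger YD_i$, eliminating the $Q$ operators entirely before composing. The paper never makes this reduction; instead it expands $\Psi_1\circ\Psi_2$ and $\Psi_2\circ\Psi_1$ directly in the $Q$ operators (four terms of the form $Q_aQ_bYQ_bQ_a$ each), then expands every such term into sixteen $D$-products, and finally appeals to the identities $D_{k,l}^\dagger=D_{-k,-l}$ and $D_{k,l}D_{k',l'}\,Y\,D_{k',l'}D_{k,l}=D_{k',l'}D_{k,l}\,Y\,D_{k,l}D_{k',l'}$ to match the resulting sixty-four terms on each side. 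Both arguments rest on the same underlying mechanism---the phase $\omega^{ab(k_1l_2-k_2l_1)}$ picked up when swapping displacement operators cancels against its conjugate on the other side of $Y$---but your preliminary simplification reduces the bookkeeping from sixty-four terms to four and makes explicit that the $\chi$-algebra is what drives the result for arbitrary $d$.
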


\vspace{0.3cm}

\begin{proof}

Given any $d \in \mathbb{N}$, two linear maps $\Psi_1$ and $\Psi_2$ defined on $B\left(\mathbf{C}^d\right)$ are given by
    \[\Psi_1 (Y)=Q_{k_1,l_1} Y Q_{k_1,l_1} + Q_{-k_1,-l_1} Y Q_{-k_1,-l_1} ~~\text{and}~~\Psi_2 (Y)=Q_{k_2,l_2} Y Q_{k_2,l_2} + Q_{-k_2,-l_2} Y Q_{-k_2,-l_2}\]

Now let $Y \in B\left(\mathbf{C}^d\right)$ be an arbitrary operator. Then,

    \begin{equation} \label{W}
        \begin{split}
\Psi_1 \circ \Psi_2  (Y)&=\Psi_1 \left( \Psi_2 (Y)\right)
\\[1ex]
 &=\Psi_1 \left( Q_{k_2,l_2} Y Q_{k_2,l_2} + Q_{-k_2,-l_2} Y Q_{-k_2,-l_2} \right)
\\[1ex]
&=\begin{aligned}[t] 
 &Q_{k_1,l_1}\left(Q_{k_2,l_2} Y Q_{k_2,l_2} + Q_{-k_2,-l_2} Y Q_{-k_2,-l_2}  \right)Q_{k_1,l_1} + \\
& Q_{-k_1,-l_1} \left( Q_{k_2,l_2} Y Q_{k_2,l_2} + Q_{-k_2,-l_2} Y Q_{-k_2,-l_2}\right) Q_{-k_1,-l_1}
\end{aligned}
\\[1ex]
&=\begin{aligned}[t]
&Q_{k_1,l_1} Q_{k_2,l_2} Y \left(Q_{k_1,l_1} Q_{k_2,l_2} \right)^{\dagger}+ Q_{k_1,l_1} Q_{-k_2,-l_2} Y \left(Q_{k_1,l_1} Q_{-k_2,-l_2} \right)^{\dagger}+\\
 &Q_{-k_1,-l_1} Q_{k_2,l_2} Y \left(Q_{-k_1,-l_1} Q_{k_2,l_2} \right)^{\dagger} + Q_{-k_1,-l_1} Q_{-k_2,-l_2} Y \left(Q_{-k_1,-l_1} Q_{-k_2,-l_2} \right)^{\dagger}
       \end{aligned}
        \end{split}
    \end{equation}

    Similarly \\
\begin{equation}\label{W1} \begin{split}
    \Psi_2 \circ \Psi_1 (Y)=Q_{k_2,l_2} Q_{k_1,l_1} Y \left( Q_{k_2,l_2} Q_{-k_1,-l_1}\right)^{\dagger} + Q_{k_2,l_2} Q_{-k_1,-l_1} Y \left( Q_{k_2,l_2} Q_{-k_1,-l_1}\right)^{\dagger}\\[1ex]
+Q_{-k_2,-l_2} Q_{k_1,l_1} Y \left( Q_{-k_2,-l_2} Q_{k_1,l_1}\right)^{\dagger}+Q_{-k_2,-l_2} Q_{-k_1,-l_1} Y \left( Q_{-k_2,-l_2} Q_{-k_1,-l_1}\right)^{\dagger}
\end{split}
 \end{equation}
 \vspace{0.2cm}
 
    Let us recall that for any $k,l\in \mathbf{N}$,  $Q_{k,l}=\chi D_{k,l} + \chi^{*} D_{k,l}^{\dagger}$ .\\

    Now one notes that from the relationship between displacement operator and Weyl operator discussed in (\ref{Weyl}),
       
        \begin{equation}\label{dagger}
            \begin{split}
                D_{k,l}^{\dagger}&=exp \left( \frac{-\pi i kl}{d}\right) W_{k,l}^{\dagger}\\
                &=exp \left( \frac{-\pi i kl}{d}\right)exp \left( \frac{2\pi i kl}{d}\right)W_{-k,-l}\\
                &=exp \left( \frac{\pi i kl}{d}\right)W_{-k,-l}\\
                &=exp \left( \frac{\pi i (-k)(-l)}{d}\right)W_{-k,-l}\\
                &=D_{-k,-l}
            \end{split}
        \end{equation}

  and for an arbitrary operator Y and for arbitrary $k,l,k',l' $, we have,
  \begin{equation}\label{product}
      \begin{split}
&D_{k,l}D_{k',l'}~ Y~ D_{k',l'}D_{k,l}\\
&=exp \left( \frac{\pi i (kl+k'l')}{d}\right)W_{k,l}W_{k'l'}~Y~exp \left( \frac{\pi i (kl+k'l')}{d}\right)W_{k'l'}W_{k,l}\\
&=\exp \left( \frac{\pi i (kl+k'l')}{d}\right)exp \left( \frac{2\pi i kl'}{d}\right) W_{k+k',l+l'} ~Y ~exp \left( \frac{\pi i (kl+k'l')}{d}\right) exp \left( \frac{2 \pi i k'l}{d}\right)W_{k'+k,l'+l}\\
&=exp \left( \frac{\pi i (kl+k'l')}{d}\right)exp \left( \frac{2 \pi i k'l}{d}\right)W_{k'+k,l'+l}~Y~ exp \left( \frac{\pi i (kl+k'l')}{d}\right)exp \left( \frac{2\pi i kl'}{d}\right)W_{k+k',l+l'}\\
&=exp \left( \frac{\pi i (kl+k'l')}{d}\right) W_{k',l'}W_{k,l}~Y~exp \left( \frac{\pi i (kl+k'l')}{d}\right)W_{k,l}W_{k',l'}\\
&=D_{k'l'}D_{k,l}Y D_{k,l}D_{k'l'}
      \end{split}
  \end{equation}

Now expanding a term of the form  $Q_{k,l} Q_{k',l'}~Y~ Q_{k',l'}Q_{k,l}$ in terms of $D_{k,l}$ and $D^{\dagger}_{k,l}=D_{-k.-l}$ we find,\\

     $Q_{k,l} Q_{k',l'}Y Q_{k',l'}Q_{k,l}$
    \begin{tiny}
    
   \begin{eqnarray*}
\begin{split}&=\left(\chi^2 D_{k,l} D_{k'l'}+\chi^{*}\chi D_{k,l}^{\dagger} D_{k'l'}+\chi \chi^{*} D_{k,l} D_{k'l'} ^{\dagger}+\chi^{* 2} D_{k,l}^{\dagger} D_{k'l'}^{\dagger}\right) Y \left( \chi^2 D_{k',l'}D_{k,l}+\chi^{*}\chi D_{k'l'}^{\dagger}D_{k,l}+\chi\chi^{*}D_{k',l'}D_{k,l}^{\dagger}+\chi^{*2}D_{k',l'}^{\dagger}D_{k,l}^{\dagger}\right)\\[1ex]
    &=\left(\chi^2 D_{k,l} D_{k'l'}+\chi^{*}\chi D_{-k,-l} D_{k'l'}+\chi \chi^{*} D_{k,l} D_{-k',-l'} +\chi^{* 2} D_{-k,-l} D_{-k',-l'}\right) Y \\[1ex] 
&\left( \chi^2 D_{k',l'}D_{k,l}+\chi^{*}\chi D_{-k',-l'}D_{k,l}+\chi\chi^{*}D_{k',l'}D_{-k,-l}+\chi^{*2}D_{-k',-l'}D_{-k,-l}\right)
\end{split}
\end{eqnarray*}
    \end{tiny}
    Therefore each of the equations (\ref{W}) and (\ref{W1}) contains a total of 64 terms.  Hence using the above relationships (\ref{dagger}) and (\ref{product}), one can establish the equality of (\ref{W}) and (\ref{W1}). 

\end{proof}

\subsection{Generalized Pauli channels}
If d is prime, then let us define $d+1$ completely positive maps as following:
\begin{eqnarray*}
    \Phi_\alpha(X)=\frac{1}{d} \sum_{n=0}^{d-1} Q_{n,\alpha n} X Q_{n,\alpha n} ~~\text{for} ~~\alpha=1,2,...,d-1
\end{eqnarray*}
\begin{eqnarray*}
    \Phi_d(X)=\frac{1}{d}\sum_{n=0}^{d-1} Q_{n,0}XQ_{n,0} ~~\text{and}~~\Phi_{d+1}(X)=\frac{1}{d}\sum_{n=0}^{d-1} Q_{0,n}XQ_{0,n}
\end{eqnarray*}
One can note that \begin{equation}\label{GP1}
    \mathbf{U}_\alpha(X)=d \Phi_{\alpha}(X)-\mathcal{I}(X)=\sum_{n=1}^{d-1} Q_{n,\alpha n}XQ_{n,\alpha n} ~~\text{for} ~~\alpha=1,2,...,d-1 
\end{equation}
\begin{equation}\label{GP2}
    \mathbf{U}_d(X)=d \Phi_d (X)-\mathcal{I}(X)=\sum_{n=1}^{d-1} Q_{n,0 }XQ_{n,0}
\end{equation} and
\begin{equation}\label{GP3}
    \mathbf{U}_{d+1}=d \Phi_{d+1}(X)-\mathcal{I}(X)=\sum_{n=1}^{d-1} Q_{0,n}XQ_{0,n}
\end{equation}
Now one can define a linear map $\Lambda: B\left(\mathbf{C}^d\right) \rightarrow B\left(\mathbf{C}^d\right) $ given by
\begin{equation}\label{GP}
    \Lambda(X)=p_0 \mathcal{I}(X)+\frac{1}{d-1}\sum_{k=1}^{d+1} p_k \mathbf{U}_k
\end{equation}
where $\left(p_0,p_1,...,p_{d+1}\right)$ is a probability vector. This is a convex mixture of total $d+2$ complete positive trace preserving maps acting on d dimensional system where d is prime and hence itself a channel.\\
\begin{proposition}
    The channel given in (\ref{GP}) corresponds to a generalized Pauli channel constructed in (\ref{GPauli}) for a prime dimension d.
\end{proposition}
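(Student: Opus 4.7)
The plan is to reduce the claim to showing that, for every $\alpha \in \{1, \ldots, d+1\}$, the H.W.-based map $\Phi_\alpha$ coincides with the MUB projection map $\Phi_\alpha(\sigma) = \sum_r P_r^\alpha \sigma P_r^\alpha$ from (\ref{proj}). Once this identification is in place, the operators $\mathbf{U}_\alpha = d\Phi_\alpha - \mathcal{I}$ defined in (\ref{GP1})--(\ref{GP3}) equal the $\mathbf{V}_\alpha$ of Section \ref{II}, and the convex combination in (\ref{GP}) is literally the generalized Pauli channel written in (\ref{GPauli}).

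Fix $\alpha \in \{1, \ldots, d-1\}$ and expand $Q_{n,\alpha n} = \chi D_{n,\alpha n} + \chi^{*} D_{n,\alpha n}^{\dagger}$ inside $\sum_{n=0}^{d-1} Q_{n,\alpha n}\, X\, Q_{n,\alpha n}$. This produces four sums of the shapes $DXD$, $DXD^{\dagger}$, $D^{\dagger}XD$, $D^{\dagger}XD^{\dagger}$. Using the identity $D_{k,l}^{\dagger} = D_{-k,-l}$ established in the proof of Theorem \ref{theorem1}, together with the change of summation index $n \mapsto -n \pmod{d}$ (a bijection of $\{0, \ldots, d-1\}$ precisely because $d$ is prime), the pure $DXD$ and $D^{\dagger}XD^{\dagger}$ sums become equal; as $\chi^{2} + (\chi^{*})^{2} = 0$ for $\chi = (1 \pm i)/2$, these two sums cancel. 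The two cross sums coincide for the same reason, and their combined coefficient $2|\chi|^{2} = 1$ yields $\sum_{n} Q_{n,\alpha n} X Q_{n,\alpha n} = \sum_{n} D_{n,\alpha n} X D_{n,\alpha n}^{\dagger}$. The phase relating $D_{n,\alpha n}$ to $W_{n,\alpha n}$ in (\ref{Weyl}) cancels against its conjugate, so the sum further simplifies to $\sum_{n} W_{n,\alpha n} X W_{n,\alpha n}^{\dagger}$.

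The final step is a standard abelian twirl. The commutation rule $W_{k,l} W_{r,s} = \exp(2\pi i ks/d)\, W_{k+r, l+s}$ recalled in Section \ref{II} shows that, modulo scalar phases, $\{W_{n,\alpha n}\}_{n=0}^{d-1}$ is an order-$d$ cyclic group of unitaries whose common eigenbasis is the $\alpha$-th MUB $\{|\eta_r^{\alpha}\rangle\}$, with $W_{n,\alpha n} |\eta_r^{\alpha}\rangle = \omega^{rn} |\eta_r^{\alpha}\rangle$ for $\omega = \exp(2\pi i/d)$. Orthogonality of the characters $n \mapsto \omega^{rn}$ of $\mathbb{Z}_d$ then delivers $\sum_{n=0}^{d-1} W_{n,\alpha n} X W_{n,\alpha n}^{\dagger} = d \sum_{r=0}^{d-1} P_r^{\alpha} X P_r^{\alpha}$, so $\Phi_\alpha(X) = \sum_r P_r^{\alpha} X P_r^{\alpha}$, matching (\ref{proj}). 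The two remaining cases $\alpha = d$ and $\alpha = d+1$ are identical, using the commuting families $\{W_{n,0}\}$ and $\{W_{0,n}\}$ and the two remaining MUBs.

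The main obstacle is precisely this last group-averaging step. One must verify that primality of $d$ guarantees both that the $d$ characters $\omega^{rn}$ are distinct and that the eigenvalues of $W_{n,\alpha n}$ on the MUB vectors take this clean character form; if the cocycle produced by $W_{n_1,\alpha n_1} W_{n_2,\alpha n_2} = \exp(2\pi i \alpha n_1 n_2/d)\, W_{n_1+n_2,\alpha(n_1+n_2)}$ were not trivializable by a diagonal phase redefinition $W_{n,\alpha n} \mapsto c_n W_{n,\alpha n}$, the twirl would not project onto the $P_r^{\alpha}$. Once this cocycle check is made (which is standard for prime $d$), the rest of the proposed proof is bookkeeping.
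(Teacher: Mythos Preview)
Your proof is correct but follows a different route from the paper's. The paper never passes through the Weyl operators at all: it observes that for each $\alpha$ the commuting Hermitian family $\{Q_n\}$ (where $Q_n$ runs over the $\alpha$-th commuting class) is simultaneously diagonalized by the $\alpha$-th MUB projectors, writes $\frac{1}{\sqrt d}Q_n=\sum_r a_{n,r}P_r$ with real $a_{n,r}$, and then invokes the Hilbert--Schmidt orthonormality of the $\frac{1}{\sqrt d}Q_n$ to get $\sum_n a_{n,r}a_{n,r'}=\delta_{r,r'}$, from which $\Phi_\alpha(X)=\sum_r P_r X P_r$ is a two-line computation. Your approach instead first expends effort cancelling the $\chi^2$ and $(\chi^*)^2$ terms to reduce $\sum_n Q_n X Q_n$ to $\sum_n W_{n,\alpha n} X W_{n,\alpha n}^\dagger$, and only then performs the twirl; this works, but the detour through $D$ and $W$ is avoidable. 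Two minor remarks: the reindexing $n\mapsto -n\pmod d$ is a bijection for every $d$, not ``precisely because $d$ is prime''; and your worry about cocycle trivializability is stronger than what is actually needed --- the twirl identity only requires that the $W_{n,\alpha n}$ commute (hence share the eigenbasis $\{P_r^\alpha\}$) and are Hilbert--Schmidt orthogonal (hence the eigenvalue matrix $(\lambda_{n,r})$ satisfies $\sum_n \lambda_{n,r}\overline{\lambda_{n,r'}}=d\,\delta_{r,r'}$), which is exactly the orthogonality the paper invokes for the $a_{n,r}$.
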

     
\begin{proof}
    Let the dimension of the Hilbert space d be prime and let us consider $d+1$ mutually unbiased bases be given by
    \[\lbrace \vert \eta_0^{\alpha} \rangle,\vert \eta_1^{\alpha} \rangle,....,\vert \eta_{d-1}^{\alpha} \rangle\rbrace , \alpha=1,2,...,d+1\]

    Therefore the set 
    \[\lbrace \vert \eta_k^{\alpha} \rangle\langle \eta_l^{\alpha} \vert : k=0,1,...,d-1 ;~~l=0,....,d-1 \rbrace\] is an orthonormal basis for the algebra $B\left(\mathbf{C}^d\right)$. It is also known that the set 
    \[\lbrace\frac{1}{\sqrt{d}}Q_{k,l}: k=0,1...,d-1; ~~l=0,1,...,d-1\rbrace\] gives another orthonormal basis for $B\left(\mathbf{C}^d\right)$.\\

    We have seen that if d is prime, the set $\lbrace Q_{k,l}: k=0,1...,d-1; ~~l=0,1,...,d-1\rbrace$ can be expressed as the disjoint union of $d+1$ subsets each consisting of $d-1$ operators i.e.

    \begin{eqnarray*}
    \begin{split}
   &\lbrace Q_{k,l}: k=0,1...,d-1; ~~l=0,1,...,d-1\rbrace\\
   &=\bigsqcup_{\alpha=1}^{d-1}  \lbrace Q_{n,\alpha n} :n=0,...,d-1 \rbrace \bigsqcup \lbrace Q_{n,0} :n=0,...,d-1\rbrace\bigsqcup \lbrace Q_{0,n} :n=0,...,d-1 \rbrace 
    \end{split}
    \end{eqnarray*}
Note that, thus each of the operators $Q_{k,l}$ in any subset can be indexed by a single parameter n with $Q_{n}$ and we have,
\begin{eqnarray*}
    \frac{1}{\sqrt{d}}Q_{n}=\sum_{r} a_{n,r} P_{r} ~~\text{with}~~\sum_{n} a_{n,r} a_{n,r'}=\delta_{r,r'}
\end{eqnarray*}
Here $P_{r}=\vert \eta_r^{\alpha} \rangle\langle \eta_r^{\alpha}\vert$.\\

    Now for any $\alpha \in \lbrace 1,2,...,d+1\rbrace$, the map
    \begin{eqnarray*}
        \begin{split}
            \Phi_{\alpha}(X)=&\frac{1}{d} \sum_{n=0}^{d-1} Q_{n}~ X~ Q_{n}\\
         &=\sum_{n=0}^{d-1} \frac{1}{\sqrt{d}}Q_{n} X \frac{1}{\sqrt{d}}Q_{n}\\
         &=\sum_{n} \sum_{r} a_{n,r} P_r ~X ~\sum_{r'} a_{n,r'} P_{r'}\\
         &=\sum_{r,r'} \left( \sum_{n} a_{n,r} a_{n,r'}\right) P_r ~X~P_{r'}\\
         &=\sum_{r=0}^{d-1} P_r~X~P_{r} ~~~~\text{as} ~~\sum_{n} a_{n,r} a_{n,r'}=\delta_{r,r'}
        \end{split}
    \end{eqnarray*}
    Therefore the linear map defined in (\ref{GP}) is same as the  generalized Pauli channel constructed in (\ref{GPauli}).
\end{proof}

One can note that if d is prime then the eigenbases of the operators $Q_{0,1},Q_{1,0},Q_{1,1},.....,Q_{1,d-1}$ also give rise to MUB and hence the similar construction of generalized Pauli channel is possible with the help of H.W. observables. One advantage of using H.W. observables in this construction is that they are Hermitian in nature and hence we have another representation of the generalized Pauli channels with Hermitian Krauss operators. Interestingly,  we have a possibility to generalize this construction further. \\

Recall that if d is prime then for each $\alpha=1,2...,d-1$ the sets $\lbrace Q_{n,\alpha n}\rbrace_{n=1,2,...,d-1}$ can be further divided into $\frac{d-1}{2}$ subsets each consisting of two operators $ Q_{k,l}, Q_{-k,-l}$ for some $k,l$.  Similarly each of the sets $\lbrace Q_{n,0}\rbrace_{n=1,2,...,d-1}$ and the set $\lbrace Q_{0,n}\rbrace_{n=1,2,...,d-1}$ can also be divided into $\frac{d-1}{2}$ subsets. So total there are $\frac{d^2-1}{2}$ subsets each consists of two operators $ Q_{k,l}, Q_{-k,-l}$ for some $k,l$.\\

For each $\alpha=1,2,...,d-1,~~ \mathbf{U}_\alpha$ in (\ref{GP1}) and $\mathbf{U}_d$ in (\ref{GP2})and $\mathbf{U}_{d+1}$ in (\ref{GP3}) is a sum of $d-1$ terms. Therefore for prime d, 
one can write for $\alpha=1,2,...,d-1$,
\begin{eqnarray*}
    \mathbf{U}_\alpha(.) = \sum_{i=1}^{\frac{d-1}{2}} \mathbf{U}^{(i)}_\alpha(.) ~~\text{where}~~\mathbf{U}^{(i)}_\alpha(.)=Q_{k_i, \alpha k_i}(.)Q_{k_i, \alpha k_i}+Q_{-k_i,-\alpha k_i}(.)Q_{-k_i,-\alpha k_i}
\end{eqnarray*}
\begin{eqnarray*}
    \mathbf{U}_d(.) = \sum_{i=1}^{\frac{d-1}{2}} \mathbf{U}^{(i)}_d(.)~~\text{where}~~\mathbf{U}^{(i)}_d(.)=Q_{k_i, 0}(.)Q_{k_i, 0}+Q_{-k_i,0}(.)Q_{-k_i,0}
\end{eqnarray*} and
\begin{eqnarray*}
    \mathbf{U}_{d+1}(.) = \sum_{i=1}^{\frac{d-1}{2}} \mathbf{U}^{(i)}_{d+1}(.) ~~\text{where}~~\mathbf{U}^{(i)}_{d+1}(.)=Q_{0, k_i}(.)Q_{0, k_i}+Q_{0,-k_i}(.)Q_{0,-k_i}
\end{eqnarray*}
Therefore one can define a linear map $\Psi:B\left(\mathbf{C}^d\right) \rightarrow B\left(\mathbf{C}^d\right)$ where d is prime,
\begin{equation} \label{GGP}
    \Psi(X)=p_0 \mathcal{I} (X)+\frac{1}{d-1}\left[\sum_{\alpha=1}^{d-1} \sum_{i=1}^{\frac{d-1}{2}}p_{\alpha}^{i} \mathbf{U}_{\alpha}^{i}(X)+\sum_{i}^{\frac{d-1}{2}}p_{d}^{i}\mathbf{U}_d^{i}(X)+\sum_{i}^{\frac{d-1}{2}}p_{d+1}^{i}\mathbf{U}_{d+1}(X)\right]
\end{equation}

By virtue of the theorem (\ref{theorem1}) one can note that any two linear maps $\mathbf{U}_\gamma^{m}$ and $\mathbf{U}_\gamma^{n}$ commute for any $m,n=1,2,...,\frac{d-1}{2}$ and $\gamma=1,2,...,d+1$ i.e.
\begin{eqnarray*}
    \mathbf{U}_\gamma^{m} \circ \mathbf{U}_\gamma^{n} = \mathbf{U}_\gamma^{n} \circ \mathbf{U}_\gamma^{m}
\end{eqnarray*}
 for any $m,n=1,2,...,\frac{d-1}{2}$ and $\gamma=1,2,...,d+1$.\\

Additionally,  if \[p_{\alpha}^{i}=p_\alpha~~\text{for}~~i=1,2,...,\frac{d-1}{2}\] and \[p_{d}^{i}=p_d ~~\text{for}~~i=1,2,...,\frac{d-1}{2}\]and\[p_{d+1}^{i}=p_{d+1}~~\text{for}~~i=1,2,...,\frac{d-1}{2}\]
Then the generalized Pauli channel $\Lambda$ in (\ref{GP}) can be obtained as a special case of the linear map $\Psi$ in (\ref{GGP}).\\

\subsection{ Case study on positive maps arising from Heisenberg-Weyl operators:}


Let $\Phi:B\left(\mathbf{C}^{d_1}\right) \rightarrow B\left(\mathbf{C}^{d_2}\right) $
be a linear map and let
\[\lbrace \Gamma_0=\frac{\mathbf{I}_{d_1}}{\sqrt{d_1}}, \Gamma_i , i=1,2,....,d_1^2-1 \rbrace\] be an orthonormal basis for the Hermitian operators in $B\left(\mathbf{C}^{d_1}\right)$ and that corresponding to the Hermitian operators in $B\left(\mathbf{C}^{d_2}\right)$ be 
\[\lbrace \Omega_0=\frac{\mathbf{I}_{d_2}}{\sqrt{d_2}}, \Omega_i , i=1,2,....,d_2^2-1 \rbrace\] where $\Gamma_i (i>0) $ and $\Omega_j (j>0)$ are traceless operators. \\
Let us write the action of the linear map $\Phi$ on the basis of the domain space as
\begin{equation} \label{map}
    \Phi(\Gamma_\alpha)=\sum_{\beta=0}^{d_2^{2}-1} R_{\beta \alpha} \Omega_{\beta}
\end{equation}

Then the matrix $R=\left( R_{\beta \alpha}\right)_{\beta=0,..,d_2^{2}-1; \alpha=0,...,d_1^{2}-1} $ is given by\\
\begin{center}
    $R=\begin{bmatrix}
    R_{00}& t\\
    s& \Delta
\end{bmatrix}$
\end{center}
where $R_{00} \in \mathbf{R}$, $t \in \mathbf{R}^{d_2^{2}-1}$, $s \in \mathbf{R}^{d_1^{2}-1}$ and $\Delta$ is a $d_2^{2}-1 \times d_2^{2}-1$ real matrix.\\

Recently, a sufficient condition for positivity of a linear map defined on $B\left(\mathbf{C}^d\right)$ has been derived in \cite{Jannesary2025}.

\begin{proposition}\cite{Jannesary2025}\label{unital1}
   $\Phi$ is unital if an only if $t=0$ and $R_{00}=\sqrt{\frac{d_2}{d_1}}$ and $\Phi$ is trace preserving if and only if $s=0$ and $R_{00}=\sqrt{\frac{d_1}{d_2}}$.
\end{proposition}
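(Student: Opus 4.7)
The plan is to translate the conditions of unitality and trace preservation into algebraic constraints on the entries of $R$ by exploiting two special features of the chosen bases: $\Gamma_0$ and $\Omega_0$ are proportional to the identity, while all other $\Gamma_\alpha$ and $\Omega_\beta$ are traceless. Both directions (necessity and sufficiency) then reduce to linear algebra once equation~(\ref{map}) is applied and the linear independence of the basis $\{\Omega_\beta\}$ is invoked.

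For the unitality half, I would start from $\mathbf{I}_{d_1} = \sqrt{d_1}\,\Gamma_0$ and use equation~(\ref{map}) to write
\[ \Phi(\mathbf{I}_{d_1}) = \sqrt{d_1}\,\Phi(\Gamma_0) = \sqrt{d_1}\sum_{\beta=0}^{d_2^{2}-1} R_{\beta 0}\,\Omega_\beta . \]
Setting this equal to $\mathbf{I}_{d_2} = \sqrt{d_2}\,\Omega_0$ and matching coefficients in the linearly independent family $\{\Omega_\beta\}$ forces $\sqrt{d_1}\,R_{00} = \sqrt{d_2}$ and $R_{\beta 0} = 0$ for every $\beta \geq 1$. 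These are precisely the conditions $R_{00} = \sqrt{d_2/d_1}$ and $t = 0$ (with $t$ identified, via its stated dimension $d_2^{2}-1$, as the off-diagonal block along the $\Gamma_0$-column of $R$). The converse is immediate: substituting back, $\Phi(\Gamma_0) = \sqrt{d_2/d_1}\,\Omega_0$, which rescales to $\Phi(\mathbf{I}_{d_1}) = \mathbf{I}_{d_2}$.

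For trace preservation, the key input is $\mathrm{Tr}\,\Omega_\beta = \sqrt{d_2}\,\delta_{\beta 0}$ and $\mathrm{Tr}\,\Gamma_\alpha = \sqrt{d_1}\,\delta_{\alpha 0}$, both of which follow from the tracelessness assumption on the non-identity basis elements. Applying the trace to equation~(\ref{map}) collapses the sum to $\mathrm{Tr}\,\Phi(\Gamma_\alpha) = \sqrt{d_2}\,R_{0\alpha}$. Since trace preservation is a linear condition and $\{\Gamma_\alpha\}$ spans $B(\mathbf{C}^{d_1})$, it is equivalent to $\mathrm{Tr}\,\Phi(\Gamma_\alpha) = \mathrm{Tr}\,\Gamma_\alpha$ for every $\alpha$. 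Comparing the two sides separately for $\alpha = 0$ and $\alpha \geq 1$ yields $R_{00} = \sqrt{d_1/d_2}$ and $R_{0\alpha}=0$ for $\alpha \geq 1$, i.e.\ $s = 0$, and the converse is again read off by substitution.

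No genuine obstacle is expected; the entire statement is a direct unpacking of definitions once the identities are expanded in the chosen orthonormal Hermitian bases. The only bookkeeping point is to keep straight which off-diagonal block of $R$ is labelled $s$ and which is labelled $t$, but the dimensions specified beside them in the setup fix the identification unambiguously.
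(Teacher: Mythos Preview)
The paper does not supply its own proof of this proposition; it is quoted as a cited result from \cite{Jannesary2025}. Your argument is correct and is the standard direct verification---expanding $\Phi(\mathbf{I}_{d_1})$ and $\mathrm{Tr}\,\Phi(\Gamma_\alpha)$ in the $\{\Omega_\beta\}$ basis and reading off coefficients is exactly how this is established, and you have handled the $s$/$t$ labelling with appropriate care.
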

\begin{proposition}\cite{Jannesary2025}{\label{positive}}
    If \begin{eqnarray*}
        \sqrt{d_2-1} \vert\vert t\vert\vert+\sqrt{d_1-1} \vert\vert s\vert\vert+\sqrt{(d_2-1)(d_1-1)}  \vert\vert \Delta\vert\vert_{\infty} \leq R_{00}
    \end{eqnarray*}
    Then the linear map $\Phi$ given in (\ref{map}) is a positive map on $B\left(\mathbf{C}^{d_1}\right)$ where $\vert\vert.\vert\vert_\infty$ stands for the operator norm.
\end{proposition}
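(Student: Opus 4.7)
The strategy is to translate the condition $\Phi(X) \geq 0$ into coordinates with respect to the two bases and then apply two cone inequalities — one bounding the ``off-identity'' part of a positive operator in the domain, and a matching sufficient condition in the codomain. The three terms in the stated hypothesis will emerge from these two bounds together with Cauchy--Schwarz, the triangle inequality and the definition of $\|\Delta\|_\infty$.

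First I would fix $X \geq 0$ in $B(\mathbf{C}^{d_1})$ and expand $X = x_0 \Gamma_0 + \sum_{\alpha\geq 1} x_\alpha \Gamma_\alpha$, so that $x_0 = \tr(X)/\sqrt{d_1} \geq 0$ and $\vec{x} = (x_\alpha)_{\alpha\geq 1}\in \mathbf{R}^{d_1^2-1}$. The Parseval identity $\tr(X^2) = x_0^2 + \|\vec{x}\|^2$ combined with $\tr(X^2) \leq (\tr X)^2$ (which is $\sum\lambda_i^2 \leq (\sum \lambda_i)^2$ for non-negative eigenvalues) yields the cone bound $\|\vec{x}\| \leq \sqrt{d_1-1}\,x_0$. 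Reading off the block form of $R$ gives $\Phi(\Gamma_0) = R_{00}\Omega_0 + \sum_{\beta\geq 1} t_\beta \Omega_\beta$ and $\Phi(\Gamma_\alpha) = s_\alpha \Omega_0 + \sum_{\beta\geq 1}\Delta_{\beta\alpha}\Omega_\beta$ for $\alpha\geq 1$, so the coordinates of $\Phi(X)$ in the basis $\{\Omega_\beta\}$ are
\begin{equation*}
    y_0 = R_{00}\,x_0 + \vec{s}\cdot\vec{x}, \qquad \vec{y} = x_0\,\vec{t} + \Delta\,\vec{x}.
\end{equation*}

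Next I would establish the dual cone criterion: writing $\Phi(X) = (y_0/\sqrt{d_2})\,\mathbf{I}_{d_2} + H$ with $H = \sum_{\beta\geq 1} y_\beta \Omega_\beta$ traceless Hermitian, positivity reduces to $\|H\|_\infty \leq y_0/\sqrt{d_2}$. Since the eigenvalues of $H$ sum to zero, Cauchy--Schwarz applied to $\lambda_j = -\sum_{i\neq j}\lambda_i$ yields $\|H\|_\infty \leq \sqrt{(d_2-1)/d_2}\,\|H\|_F = \sqrt{(d_2-1)/d_2}\,\|\vec{y}\|$, so $\Phi(X) \geq 0$ whenever $y_0 \geq \sqrt{d_2-1}\,\|\vec{y}\|$. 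Chaining Cauchy--Schwarz on $\vec{s}\cdot\vec{x}$, the triangle inequality, and $\|\Delta\vec{x}\| \leq \|\Delta\|_\infty\,\|\vec{x}\|$ together with the cone bound $\|\vec{x}\|\leq \sqrt{d_1-1}\,x_0$, I obtain $y_0 \geq x_0(R_{00} - \sqrt{d_1-1}\,\|s\|)$ and $\|\vec{y}\| \leq x_0(\|t\| + \sqrt{d_1-1}\,\|\Delta\|_\infty)$. Substituting into $y_0 \geq \sqrt{d_2-1}\,\|\vec{y}\|$ and cancelling $x_0 \geq 0$ reproduces the stated hypothesis exactly; the edge case $x_0 = 0$ forces $X = 0$ and is trivial.

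The main obstacle is the clean derivation of the two cone bounds, particularly the eigenvalue estimate $\|H\|_\infty \leq \sqrt{(d_2-1)/d_2}\,\|H\|_F$ for traceless Hermitian $H$ — every subsequent step is a routine Cauchy--Schwarz, triangle, or operator-norm inequality. A minor bookkeeping point is the orientation of the blocks: the dimensions $t \in \mathbf{R}^{d_2^2-1}$ and $s \in \mathbf{R}^{d_1^2-1}$ force $t$ to be the column below $R_{00}$ and $s$ to be the row to its right, which is precisely the role they play in the identifications $y_0 = R_{00}x_0 + \vec{s}\cdot\vec{x}$ and $\vec{y} = x_0\vec{t} + \Delta\vec{x}$ used above.
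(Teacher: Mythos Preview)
The paper does not supply its own proof of this proposition; it is quoted from \cite{Jannesary2025} and used as a black box, so there is nothing in the present paper to compare your argument against. Your proof is correct and is essentially the standard Bloch-ball argument one would expect the cited reference to give: the forward cone bound $\|\vec{x}\|\leq\sqrt{d_1-1}\,x_0$ for positive inputs, the reverse cone criterion $y_0\geq\sqrt{d_2-1}\,\|\vec{y}\|$ for positivity of the output (via the traceless eigenvalue estimate $\|H\|_\infty\leq\sqrt{(d_2-1)/d_2}\,\|H\|_F$), and then Cauchy--Schwarz plus the triangle inequality on the affine relation $(y_0,\vec{y})=(R_{00}x_0+\vec{s}\cdot\vec{x},\,x_0\vec{t}+\Delta\vec{x})$. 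The three terms in the hypothesis fall out exactly as you show.

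Your bookkeeping remark about the orientation of $t$ and $s$ is also well taken: the displayed block matrix in the paper has $t$ and $s$ visually swapped relative to what the stated dimensions $t\in\mathbf{R}^{d_2^2-1}$, $s\in\mathbf{R}^{d_1^2-1}$ and Proposition~\ref{unital1} require, and you resolved it the right way.
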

 Let us recall that the H.W. observables are traceless operators and they form an orthogonal basis for Hermitian operators in $B\left(\mathbf{C}^{d}\right)$.
 Consider the linear map $\tilde{\Phi} : B\left(\mathbf{C}^d\right) \rightarrow B\left(\mathbf{C}^d\right)$ defined as 
\begin{equation} \label{H.W.1}
    \tilde{\Phi}_{k,l} (Y) =  Q_{k,l} Y~ Q_{k,l} + Q_{-k,-l} Y Q_{-k,-l}
\end{equation}
 Let us prove the following proposition first,

 \begin{proposition} \label{prop 7}
     The operators $Q_{m,n}$ where $m=0,1,...,d-1$ and $n=0,1,...,d-1$ are the eigenvectors of the map (\ref{H.W.1}) with the real eigenvalues
     \[2  \cos\left( \frac{ 2 \pi (kn-lm)}{d}\right) \].
      \end{proposition}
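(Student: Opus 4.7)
The plan is to exploit the fact that conjugation by a displacement operator $D_{k,l}$ multiplies any other displacement operator by a phase, and then to collapse the sum over $Q_{k,l}$ and $Q_{-k,-l}$ into a more manageable expression.

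First I would simplify the map itself. Using $D_{k,l}^\dagger = D_{-k,-l}$, one has $Q_{k,l}=\chi D_{k,l}+\chi^{*}D_{-k,-l}$ and $Q_{-k,-l}=\chi D_{-k,-l}+\chi^{*}D_{k,l}$. Expanding $Q_{k,l}YQ_{k,l}+Q_{-k,-l}YQ_{-k,-l}$ and using $\chi^{2}+\chi^{*2}=0$ together with $\chi\chi^{*}=\tfrac{1}{2}$, all the $D_{k,l}YD_{k,l}$ and $D_{-k,-l}YD_{-k,-l}$ contributions cancel, leaving the clean identity
\begin{equation*}
\tilde{\Phi}_{k,l}(Y)=D_{k,l}\,Y\,D_{k,l}^{\dagger}+D_{k,l}^{\dagger}\,Y\,D_{k,l}.
\end{equation*}

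Next I would record the commutation relation between displacement operators. From $W_{k,l}W_{m,n}=\exp(2\pi i kn/d)W_{k+m,l+n}$ and the analogous formula with $(k,l),(m,n)$ swapped, a direct computation gives
\begin{equation*}
D_{k,l}D_{m,n}=\exp\!\left(\tfrac{2\pi i(kn-lm)}{d}\right)D_{m,n}D_{k,l}.
\end{equation*}
Writing $\phi:=\exp(2\pi i(kn-lm)/d)$ and multiplying on the right by $D_{k,l}^{\dagger}$ yields $D_{k,l}D_{m,n}D_{k,l}^{\dagger}=\phi\,D_{m,n}$; replacing $(m,n)$ by $(-m,-n)$ flips the sign in the exponent, and replacing $(k,l)$ by $(-k,-l)$ also flips the sign.

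Then I would substitute $Y=Q_{m,n}=\chi D_{m,n}+\chi^{*}D_{-m,-n}$ into the simplified map. Applying the four conjugation identities from the previous step gives
\begin{equation*}
\tilde{\Phi}_{k,l}(Q_{m,n})=\chi(\phi+\phi^{-1})D_{m,n}+\chi^{*}(\phi+\phi^{-1})D_{-m,-n}=(\phi+\phi^{-1})Q_{m,n},
\end{equation*}
and $\phi+\phi^{-1}=2\cos(2\pi(kn-lm)/d)$ is manifestly real, establishing the eigenvalue.

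The only genuinely delicate point is step one, the algebraic cancellation that reduces $Q_{k,l}YQ_{k,l}+Q_{-k,-l}YQ_{-k,-l}$ to a symmetric sum in $D_{k,l}$ and $D_{k,l}^{\dagger}$; everything afterwards is a clean application of the Weyl commutation relation and linearity. Once that reduction is in place, the $\chi,\chi^{*}$ coefficients reassemble themselves into $Q_{m,n}$ automatically, which is what makes the eigenvalue come out real despite the individual displacement conjugations being complex.
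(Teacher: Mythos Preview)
Your proof is correct. The algebraic content is the same as the paper's --- both rely on the cancellation $\chi^{2}+\chi^{*2}=0$ (the paper phrases it as $\kappa^{3}+\kappa\kappa^{*2}=0$) together with the Weyl commutation relation --- but you organize the computation differently. The paper expands $Q_{k,l}Q_{m,n}Q_{k,l}+Q_{-k,-l}Q_{m,n}Q_{-k,-l}$ all at once into terms graded by powers of $\kappa$ and $\kappa^{*}$, then cancels and reduces; you instead first simplify the map for a \emph{general} argument $Y$ to the clean form
\[
\tilde{\Phi}_{k,l}(Y)=D_{k,l}\,Y\,D_{k,l}^{\dagger}+D_{k,l}^{\dagger}\,Y\,D_{k,l},
\]
and only afterwards substitute $Y=Q_{m,n}$. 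Your route is shorter and yields this structural identity as a byproduct: $\tilde{\Phi}_{k,l}$ is exactly the symmetrized unitary conjugation by $D_{k,l}$, which immediately explains why the $D_{m,n}$ are eigenvectors and why the eigenvalues are real. The paper's direct expansion arrives at the same intermediate expression (its equation (\ref{eigen}) is, after noting $2\kappa^{2}\kappa^{*}=\kappa$, precisely your simplified map applied to $\chi D_{m,n}+\chi^{*}D_{-m,-n}$), but without isolating the identity for general $Y$.
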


 \begin{proof}
     For any m and n, we shall prove the eigenvalue equation
     \[\tilde{\Phi}_{k,l} (Q_{m,n})= \lambda_{k,l,m,n} ~~Q_{m,n}\] with \[\lambda_{k,l,m,n}= 2  \cos\left( \frac{ 2 \pi (kn-lm)}{d}\right)\]
Now, \[\tilde{\Phi}_{k,l} (Q_{m,n})=Q_{k,l} Q_{m,n}~ Q_{k,l} + Q_{-k,-l} Q_{m,n} Q_{-k,-l} \]
From the definition of the H.W. observables we get, 

\[\tilde{\Phi}_{k,l} (Q_{m,n})=\left( \kappa D_{k,l}+\kappa^* D^{\dagger}_{k,l}\right)\left( \kappa D_{m,n}+\kappa^* D^{\dagger}_{m,n}\right)\left( \kappa D_{k,l}+\kappa^* D^{\dagger}_{k,l}\right)+\]\[\left( \kappa D_{-k,-l}+\kappa^* D^{\dagger}_{-k,-l}\right)\left( \kappa D_{m,n}+\kappa^* D^{\dagger}_{m,n}\right)\left( \kappa D_{-k,-l}+\kappa^* D^{\dagger}_{-k,-l}\right)\]

expanding the products and using the fact 
$D_{p,q}^{\dagger}=D_{-p,-q}$ we get,

\[\kappa^3 \left( D_{k,l}D_{m,n}D_{k,l}+D_{-k,-l}D_{m,n}D_{-k,-l}\right) + \kappa^{*3} \left( D_{-k,-l}D_{-m,-n}D_{-k,-l}+D_{k,l}D_{-m,-n}D_{k,l}\right) \]
\[+\kappa^2\kappa^* \left( 2 D_{-k,-l}D_{m,n}D_{k,l}+2 D_{k,l}D_{m,n}D_{-k,-l}+D_{k,l}D_{-m,-n}D_{k,l}+D_{-k,-l}D_{-m,-n}D_{-k,-l}\right)\]
\begin{eqnarray*}
   + \kappa \kappa^{*2}\left( 2 D_{-k,-l}D_{-m,-n}D_{k,l}+2 D_{k,l}D_{-m,-n}D_{-k,-l}+D_{-k,-l}D_{m,n}D_{-k,-l}+D_{k,l}D_{m,n}D_{k,l} \right)
\end{eqnarray*}
This further reduces to 
\begin{equation}\label{eigen}
    2 \kappa^2\kappa^* \left(  D_{-k,-l}D_{m,n}D_{k,l}+ D_{k,l}D_{m,n}D_{-k,-l}\right)+2 \kappa \kappa^{*2}\left(  D_{-k,-l}D_{-m,-n}D_{k,l}+ D_{k,l}D_{-m,-n}D_{-k,-l}\right)
\end{equation}
     since $\left(\kappa^3+\kappa \kappa^{*2} \right)=\left(\kappa^{*3}+\kappa^2 \kappa^{*} \right)=0$. Let us recall that \[D_{p,q}=exp\left(\frac{-i \pi p q}{d}\right) Z^p X^q ~~\text{and}~~Z^pX^q=exp\left(\frac{i 2 \pi p q}{d}\right) X^qZ^p\]
     Using the definition of $D_{p,q}$ and this commutation relation between powers of X and Z operators (\ref{eigen}) reduces to 
     \[\lambda_{k,l,m,n} \left(\kappa D_{m,n}+\kappa^* D_{-m,-n}\right)= \lambda_{k,l,m,n}~~ Q_{m,n}\]
     
 \end{proof}

 \begin{corollary}
     Let us consider a linear map $\Lambda : B\left(\mathbf{C}^d\right) \rightarrow B\left(\mathbf{C}^d\right)$ defined as 
\begin{equation}\label{H.W. map}
    \Lambda (Y) = \sum_{k,l=0}^{d-1} p_{k,l}~ Q_{k,l} Y~ Q_{k,l}
\end{equation}
   where d is an odd natural number then if  $p_{k,l}=p_{-k,-l}$, the matrix R corresponding to the map $\Lambda$ is diagonal.
 \end{corollary}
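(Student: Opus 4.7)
The plan is to show that each Heisenberg-Weyl observable $Q_{m,n}$ is an eigenvector of $\Lambda$; since the normalized operators $\{Q_{m,n}/\sqrt{d}\}_{0\le m,n\le d-1}$ (with $Q_{0,0}=\mathbf{I}$ and all others traceless) form an orthonormal basis for the Hermitian operators in $B(\mathbf{C}^d)$, this immediately implies that the matrix $R$ representing $\Lambda$ in this basis is diagonal.

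First I would exploit the oddness of $d$ to pair up the indices. Since $\gcd(2,d)=1$, the congruence $(k,l)\equiv(-k,-l)\pmod d$ forces $(k,l)=(0,0)$. Thus the nonzero indices $(k,l)\in\{0,\dots,d-1\}^2\setminus\{(0,0)\}$ split into $\frac{d^2-1}{2}$ disjoint pairs $\{(k,l),(-k,-l)\}$. Choose a set $S$ of representatives. Using $p_{k,l}=p_{-k,-l}$, we can rewrite
\begin{equation*}
\Lambda(Y)=p_{0,0}\,Y+\sum_{(k,l)\in S} p_{k,l}\bigl(Q_{k,l}Y\,Q_{k,l}+Q_{-k,-l}Y\,Q_{-k,-l}\bigr)=p_{0,0}\,Y+\sum_{(k,l)\in S} p_{k,l}\,\tilde{\Phi}_{k,l}(Y),
\end{equation*}
where $\tilde{\Phi}_{k,l}$ is the map introduced in equation (\ref{H.W.1}).

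Next I would apply Proposition \ref{prop 7}, which tells us that every H.W. observable $Q_{m,n}$ is an eigenvector of $\tilde{\Phi}_{k,l}$ with real eigenvalue $2\cos\bigl(2\pi(kn-lm)/d\bigr)$. Consequently
\begin{equation*}
\Lambda(Q_{m,n})=\biggl[p_{0,0}+\sum_{(k,l)\in S} 2\,p_{k,l}\cos\!\left(\tfrac{2\pi(kn-lm)}{d}\right)\biggr]Q_{m,n}=:\mu_{m,n}\,Q_{m,n}.
\end{equation*}
For the distinguished index $(m,n)=(0,0)$, the same pairing together with Proposition \ref{p1} (which yields $Q_{k,l}^2+Q_{-k,-l}^2=2\mathbf{I}$) gives $\Lambda(\mathbf{I})=\bigl(\sum_{k,l}p_{k,l}\bigr)\mathbf{I}$, so $\mathbf{I}$ is also an eigenvector.

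Finally I would translate this back to the matrix $R$: expanding $\Lambda\bigl(Q_{m,n}/\sqrt{d}\bigr)=\mu_{m,n}\bigl(Q_{m,n}/\sqrt{d}\bigr)$ in the orthonormal H.W. basis yields $R_{\beta\alpha}=\mu_\alpha\delta_{\beta\alpha}$, proving that $R$ is diagonal. The only genuinely delicate point is the insistence that $d$ be odd: for even $d$ the equation $2k\equiv 2l\equiv 0\pmod d$ has the extra solutions $(d/2,0),(0,d/2),(d/2,d/2)$, which would break the pairing argument and require separate treatment. All other steps are bookkeeping, supported directly by Propositions \ref{p1} and \ref{prop 7}.
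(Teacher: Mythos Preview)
Your proposal is correct and follows essentially the same route as the paper: rewrite $\Lambda$ as $p_{0,0}\,\mathrm{id}$ plus a sum of the paired maps $\tilde{\Phi}_{k,l}$, invoke Proposition~\ref{prop 7} to see that every $Q_{m,n}$ is an eigenvector, and conclude that $R$ is diagonal. Your version is in fact a bit more careful than the paper's, since you explicitly justify why oddness of $d$ is exactly what makes the pairing $\{(k,l),(-k,-l)\}$ work (the paper leaves this implicit), and your separate treatment of $Q_{0,0}$ via Proposition~\ref{p1} is harmless but redundant, as the eigenvalue formula already gives $\lambda_{k,l,0,0}=2$.
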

 \begin{proof}
         If d is an odd natural number and $p_{k,l}=p_{-k,-l}$ then we can rewrite the map (\ref{H.W. map}) as
         \begin{eqnarray*}
          \Lambda (Y) = p_{0,0} Q_{0,0} Y Q_{0,0}+\sum_{k,l}  p_{k,l} \left(Q_{k,l} Y Q_{k,l} +Q_{-k,-l} Y Q_{-k,-l}\right)
         \end{eqnarray*}
     Now using the proposition (\ref{prop 7}), we get
     \[\Lambda (Q_{m,n})=p_{0,0} Q_{m,n}+\sum_{k,l} p_{k,l}~ \lambda_{k,l,m,n} ~Q_{m,n}\]
     \[=\left( p_{0,0}+ \sum_{k,l} p_{k,l}~ \lambda_{k,l,m,n}\right) Q_{m,n}\]
     Since the set of H.W. observables $\lbrace Q_{m,n} \rbrace$ forms a basis, it shows from the eigenvalue equations that the matrix representation of the linear map $\Lambda$ with respect to the H.W. observable basis is diagonal and hence the corresponding R matrix is diagonal.
     \end{proof}

 Let us consider the dimension of the system be $d=3$ for a specific case study and for the brevity let us denote the H.W. observables as follows:
\[Q_{0,0}:=Q_0,Q_{0,1}:=Q_1,Q_{0,2}:=Q_2,Q_{1,0}:=Q_3,Q_{1,1}:=Q_4\]\[Q_{1,2}:=Q_5,Q_{2,0}:=Q_6,Q_{2,1}:=Q_7, Q_{2,2}:=Q_8\]

We are interested in positivity condition for the  linear map $\tilde\Lambda:B\left(\mathbf{C}^3 \right) \rightarrow B\left(\mathbf{C}^3\right) $ given by
\begin{equation}\label{d=3}
    \tilde\Lambda (Y)=\sum_{i=0}^8 p_i ~Q_i Y Q_i
\end{equation}
as a case study.
From proposition \ref{unital1} we have the following:
\begin{eqnarray*}
    p_1=p_2; p_3=p_6; p_4=p_8; p_5=p_7 ~\text{and}~ p_0=1-2(p_1+p_3+p_4+p_5)  ~~~~\text{imply}~~~~ \tilde\Lambda ~~\text{is unital}.
\end{eqnarray*}

Now by virtue of (\ref{map}) one can compute the R matrix corresponding to the map $\tilde\Lambda$ and finds that
\begin{eqnarray*}
    R_{00}=\sum_{i=0}^8 p_i
\end{eqnarray*}
\begin{eqnarray*}
    \text{vector}~~ t=s=\left( \frac{p_2-p_1}{2},\frac{p_1-p_2}{2},\frac{p_6-p_3}{2},\frac{p_4-p_8}{2},\frac{p_7-p_5}{2},\frac{p_3-p_6}{2},\frac{p_5-p_7}{2},\frac{p_4-p_8}{2}, \right)
\end{eqnarray*}
\\

Therefore from the proposition (\ref{unital1}) one has the following corollary.\\

\begin{corollary}
    The linear map defined in (\ref{d=3}) is unital if and only if
     \begin{eqnarray*}
    p_1=p_2; p_3=p_6; p_4=p_8; p_5=p_7 ~\text{and}~ p_0=1-2(p_1+p_3+p_4+p_5) 
    \end{eqnarray*}
\end{corollary}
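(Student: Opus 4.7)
The plan is to reduce the corollary to a direct application of Proposition \ref{unital1}, which has already been quoted with $t=0$ and $R_{00}=\sqrt{d_2/d_1}$ as the necessary and sufficient conditions for unitality. Since the map $\tilde\Lambda$ acts on $B(\mathbf{C}^3)$, we have $d_1=d_2=3$, so the condition $R_{00}=\sqrt{d_2/d_1}=1$ collapses to $\sum_{i=0}^{8} p_i = 1$, using the expression for $R_{00}$ computed just above the corollary.

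First I would invoke the explicit form of the vector $t$ already displayed in the excerpt. Its eight entries are (up to sign) the pairwise differences $p_1-p_2$, $p_3-p_6$, $p_4-p_8$, and $p_5-p_7$, with each such difference appearing twice (once in $t$ and, by symmetry, once in $s$, which reflects the fact that the Kraus operators $Q_i$ are Hermitian, so the map is self-adjoint and $s$ equals $t$). Therefore the condition $t=0$ is equivalent to the four scalar equations $p_1=p_2$, $p_3=p_6$, $p_4=p_8$, $p_5=p_7$.

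Next I would translate the normalization condition $R_{00}=1$, namely $\sum_{i=0}^{8} p_i = 1$. Substituting the four equalities just obtained collapses the sum to $p_0 + 2(p_1+p_3+p_4+p_5) = 1$, which is exactly $p_0=1-2(p_1+p_3+p_4+p_5)$. The converse is immediate: if the four equalities and the normalization hold, then $t=0$ and $R_{00}=1=\sqrt{d_2/d_1}$, and Proposition \ref{unital1} yields that $\tilde\Lambda$ is unital.

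There is essentially no obstacle here; the corollary is a bookkeeping consequence of Proposition \ref{unital1} together with the explicit computation of the $R$-matrix entries that precedes the statement. The only thing worth double-checking is the consistency between the listed entries of $t$ and the symmetry $p_{k,l}\leftrightarrow p_{-k,-l}$ induced by the identification $Q_{k,l}=Q_{-k,-l}^{\dagger}$ (equivalently, by \eqref{dagger}); once that matching is verified, the equivalence follows.
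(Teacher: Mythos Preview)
Your proposal is correct and is exactly the argument the paper uses: the corollary is stated as an immediate consequence of Proposition~\ref{unital1} together with the explicit expressions for $R_{00}=\sum_i p_i$ and the vector $t$ computed just above, so that $t=0$ forces the four pairwise equalities and $R_{00}=1$ then gives the normalization of $p_0$. One minor caution on your closing aside: the relation $Q_{k,l}=Q_{-k,-l}^{\dagger}$ is not correct (the $Q_{k,l}$ are Hermitian, so this would force $Q_{k,l}=Q_{-k,-l}$, which fails); the pairing $(k,l)\leftrightarrow(-k,-l)$ really comes from $D_{k,l}^{\dagger}=D_{-k,-l}$ as in \eqref{dagger}, but this does not affect your main argument.
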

\vspace{0.2cm}
Therefore the assumption of unitality reduces the number of independent parameters from 9 to 4 viz. $p_1,p_3,p_4,p_5$ say.\\

\begin{proposition}
    $\tilde\Lambda$ is unital completely positive if and only if 
    \begin{eqnarray*}
        p_0 \geq 0, p_1 \geq 0, p_3\geq 0 , p_4 \geq0, p_5 \geq0
    \end{eqnarray*}
    \text{Moreover} $p_0 \geq0$ implies $p_1+p_3+p_4+p_5 \leq \frac{1}{2}$
\end{proposition}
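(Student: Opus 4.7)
The plan is to prove the biconditional via two separate directions and then derive the moreover clause algebraically. For one direction I will exploit the Hermiticity of the H.W.\ observables to exhibit a Kraus decomposition with non-negative weights, which gives complete positivity immediately. For the other direction I will invoke the Choi--Jamiolkowski criterion combined with the orthogonality of the H.W.\ basis to show that the Choi matrix of $\tilde\Lambda$ is diagonalised by a canonical orthonormal basis built from the $Q_i$, so that its spectrum is precisely $\{d\,p_i\}_{i=0}^{8}$.

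For sufficiency, assume each of $p_0,p_1,p_3,p_4,p_5$ is non-negative. The unitality identifications of the preceding corollary then force $p_2,p_6,p_7,p_8$ to be non-negative as well (they equal $p_1,p_3,p_5,p_4$ respectively) and the relation $p_0 = 1-2(p_1+p_3+p_4+p_5)$ is already built in. Since every $Q_i$ is Hermitian by construction, setting $K_i := \sqrt{p_i}\,Q_i$ yields
\[
\tilde\Lambda(Y) = \sum_{i=0}^{8} K_i\,Y\,K_i^{\dagger},
\]
a bona fide Kraus decomposition; hence $\tilde\Lambda$ is completely positive, and it is unital by the earlier corollary.

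For necessity, assume $\tilde\Lambda$ is unital and completely positive. Form the Choi matrix
\[
C_{\tilde\Lambda} = (\mathbf{I}\otimes \tilde\Lambda)\bigl(|\Omega\rangle\langle\Omega|\bigr) = \sum_{i=0}^{8} p_i\,(\mathbf{I}\otimes Q_i)\,|\Omega\rangle\langle\Omega|\,(\mathbf{I}\otimes Q_i),
\]
where $|\Omega\rangle = \sum_{r=0}^{2}|r\rangle\otimes|r\rangle$ and I have used $Q_i^{\dagger}=Q_i$. Using the standard identity $\langle\Omega|A\otimes B|\Omega\rangle = \mathrm{Tr}(A^{T}B)$ together with the orthogonality $\mathrm{Tr}(Q_iQ_j) = d\,\delta_{ij}$ of the H.W.\ basis (already recorded in the excerpt), the vectors $|v_i\rangle := (\mathbf{I}\otimes Q_i)|\Omega\rangle/\sqrt{d}$ form an orthonormal set in $\mathbf{C}^3\otimes\mathbf{C}^3$. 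Therefore
\[
C_{\tilde\Lambda} = d\sum_{i=0}^{8} p_i\,|v_i\rangle\langle v_i|
\]
is already in spectral form with eigenvalues $\{d\,p_i\}_{i=0}^{8}$. Choi's theorem gives $C_{\tilde\Lambda}\ge 0$, hence $p_i\ge 0$ for every $i$; after the unitality identifications this collapses to the five conditions $p_0,p_1,p_3,p_4,p_5\ge 0$.

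The moreover clause is a direct substitution: unitality gives $p_0 = 1-2(p_1+p_3+p_4+p_5)$, so $p_0\ge 0$ rewrites as $p_1+p_3+p_4+p_5\le 1/2$. I do not anticipate a genuine obstacle; the only step that requires care is verifying that the nine vectors $(\mathbf{I}\otimes Q_i)|\Omega\rangle$ are mutually orthogonal, which is a one-line consequence of the maximally-entangled-vector identity and the already-established orthogonality of the H.W.\ basis.
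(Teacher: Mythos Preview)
Your argument is correct. The paper actually states this proposition without supplying any proof, so there is nothing to compare against; your Choi--Jamiolkowski computation (exploiting that $\{d^{-1/2}Q_i\}$ is an orthonormal basis, which the paper records) is the standard route and fills the gap cleanly.
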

\vspace{0.2cm}

One would like to have positive but not completely positive maps out of this construction. In this regard one can note that to have positive but not completely positive maps not all of $p_0, p_1, p_3, p_4, p_5$ can be non-negative. From the sufficient condition for positivity we derive maximum how many of them can be non-negative for a unital map defined via \ref{d=3}\\

One finds that the assumption of unitality for the map defined via \ref{d=3} reduces the sub-matrix $\Delta$ in the R matrix to a diagonal matrix with following four terms in the diagonal each repeated twice.\\
\begin{equation}\label{eigenvalues}
    \lambda^{(1)}= 1-3 (p_3+p_4+p_5); \lambda^{(2)}= 1-3 (p_1+p_4+p_5);\lambda^{(3)}= 1-3 (p_1+p_3+p_5);\lambda^{(4)}= 1-3 (p_1+p_3+p_4);
\end{equation}
Now we prove the following theorem\\

\begin{theorem}
    Let us consider the unital linear map defined by (\ref{d=3}) with $2  \vert\vert\Delta\vert\vert_{\infty} \leq 1$ involving the independent parameters $p_1,p_3,p_4,p_5$ say. If $\lambda^{(i)}, ~i=1,2,3,4$ are of same sign, at least two of $p_1,p_3,p_4,p_5$ have to be positive.
\end{theorem}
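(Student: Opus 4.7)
My plan is to prove the contrapositive: I assume that at most one of the four parameters $p_1, p_3, p_4, p_5$ is strictly positive, and derive a contradiction with the standing hypothesis $2\|\Delta\|_{\infty} \leq 1$ together with the common-sign condition on the $\lambda^{(i)}$. The crucial structural observation, which makes the four parameters interchangeable, is that on setting $S := p_1 + p_3 + p_4 + p_5$ the eigenvalues in (\ref{eigenvalues}) can be rewritten uniformly as $\lambda^{(i)} = (1 - 3S) + 3 p_{(i)}$, where $(p_{(1)}, p_{(2)}, p_{(3)}, p_{(4)}) = (p_1, p_3, p_4, p_5)$. Since $\Delta$ is diagonal with entries the $\lambda^{(i)}$ (each appearing twice), the hypothesis $2\|\Delta\|_{\infty} \leq 1$ is simply $|\lambda^{(i)}| \leq \tfrac{1}{2}$ for every $i$.

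Under the contrapositive assumption the symmetry of the parametrisation lets me relabel so that $p_1$ is the only possibly positive one while $p_3, p_4, p_5 \leq 0$; in particular $S \leq p_1$. I then split on the common sign. If every $\lambda^{(i)} \geq 0$, I combine the upper bound $\lambda^{(1)} \leq \tfrac{1}{2}$ with the uniform formula to obtain $p_1 \leq S - \tfrac{1}{6}$; together with $S \leq p_1$ this gives $0 \leq -\tfrac{1}{6}$, impossible. If every $\lambda^{(i)} \leq 0$, I use $\lambda^{(1)} \leq 0$ directly (not even needing the operator-norm bound) to obtain $p_1 \leq S - \tfrac{1}{3}$; combined with $S \leq p_1$ this yields $0 \leq -\tfrac{1}{3}$, again impossible.

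Either case contradicts the contrapositive assumption, so at least two of $p_1, p_3, p_4, p_5$ must be strictly positive. I expect no real obstacle beyond careful bookkeeping of signs in the two cases; the main subtlety is to notice that the four eigenvalues share the affine structure $\lambda^{(i)} = (1-3S) + 3 p_{(i)}$, which is what permits the relabelling argument and reduces the problem to a one-line comparison in each branch.
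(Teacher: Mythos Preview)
Your proof is correct and rests on the same inequality content as the paper's: in the nonnegative case the bound $\lambda^{(i)}\le\tfrac12$ forces each three-sum $p_j+p_k+p_\ell\ge\tfrac16$, and in the nonpositive case the sign condition alone forces each three-sum $\ge\tfrac13$, either of which rules out having at most one positive parameter. Your contrapositive-plus-symmetry organisation and the affine rewriting $\lambda^{(i)}=(1-3S)+3p_{(i)}$ are a cleaner packaging---you need only one eigenvalue inequality per case and you make explicit the final implication that the paper states without justification---but the route is essentially the same.
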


\begin{proof}
    \textbf{Case 1: If $\lambda^{(i)} \geq 0 ~\text{for}~i=1,2,3,4$ } \\
    
    Given that the map defined by (\ref{d=3}) is unital one has the vector~~$t=\theta$. Moreover for this map the vector t and the vector s being the same, this map is also trace preserving with the vector $s=\theta$. Therefore  from proposition \ref{positive} one has if \begin{equation} \label{63}
        2  \vert\vert \Delta\vert\vert_{\infty} \leq 1
    \end{equation} then the map is positive. Note that $\vert\vert.\vert\vert_\infty$ is the operator norm which equals to the largest singular value of the operator. Therefore from (\ref{eigenvalues}) we get 
    \[2  \vert\vert\Delta\vert\vert_{\infty} \leq 1\]\[ \implies 2 max \lbrace \left( 1-3(p_3+p_4+p5)\right), \left( 1- 3 (p_1+p_4+p_5), \left( 1-3(p_1+p_3+p_5)\right) , \left(1-3(p_1+p_3+p_4)\right)\right)\rbrace \leq 1\]
    \[\implies 1+3 max \lbrace -(p_3+p_4+p5) , -(p_1+p_4+p_5), -(p_1+p_3+p_5), -(p_1+p_3+p_4) \rbrace \leq \frac{1}{2} \]
    \[\implies max \lbrace -(p_3+p_4+p5), -(p_1+p_4+p_5),-(p_1+p_3+p_5), - (p_1+p_3+p_4) \rbrace \leq -\frac{1}{6}\]
    \[\implies - min \lbrace (p_3+p_4+p5), (p_1+p_4+p_5), (p_1+p_3+p_5), (p_1+p_3+p_4)  \rbrace \leq - \frac{1}{6}\]
    \[\implies min \lbrace (p_3+p_4+p5), (p_1+p_4+p_5), (p_1+p_3+p_5), (p_1+p_3+p_4) \rbrace \geq \frac{1}{6}\]Therefore among $p_1, p_3, p_4,p_5$ at least two of them must be positive. \\

    \textbf{Case 2: If $\lambda^{(i)} \leq 0 ~\text{for}~i=1,2,3,4$ } \\

    As $\lambda^{(i)} \leq 0 ~i=1,2,3,4$  then we can see that all the terms 
    \begin{eqnarray*}
       (p_3+p_4+p5) \geq  \frac{1}{3}, (p_1+p_4+p_5) \geq  \frac{1}{3},   (p_1+p_3+p_5) \geq \frac{1}{3}, (p_1+p_3+p_4) \geq  \frac{1}{3}
    \end{eqnarray*}

    
    As  $\lambda^{(i)} \leq 0 ~i=1,2,3,4$ so the inequality  (\ref{63}) is always satisfied. For here we can conclude that atleast two of $p_1,p_3,p_4,p_5$ must be positive.
\end{proof}
\vspace{0.3cm}

\begin{remark}
    If one consider $p_0 = \frac{-1}{3}, p_1=p_3=p_4=p_5=\frac{1}{6}$, one gets 
    \begin{eqnarray*}
        \mathcal{R} (X)= \frac{1}{2} \left( Tr(X) \mathbf{I}_3 - X\right)
    \end{eqnarray*}
    which is the famous Reduction map on $d=3$.
\end{remark}

\vspace{1cm}

\section{Conclusion and future directions:} \label{V}
Heisenberg-Weyl (H.W.) operators provide a sort of generalization of Pauli operators in higher dimensions. They are Hermitian by construction and they form an operator basis in higher dimension. In this work we analyze some of the algebraic and spectral properties of such operators. If the dimension on which H.W. operators are defined be a prime number d, then they can be split into $d+1$ mutually exclusive subsets each consists of $d-1$ operators which commute with each other. This property resembles the commutation property of Weyl operators in prime dimensions.\\

Moreover we have analyzed linear maps arising from H.W. operators or H.W. observables. We derive a sufficient condition for unitality of the Heisenberg-Weyl (H.W.) maps i.e the maps that can be constructed with H.W. observables. This map is a generalization of Pauli maps defined on qubit systems. We note from the unitality condition that if the dimension of the system under consideration is an odd natural number, then the mutually commuting set of H.W. observables can further be divided into pairs. These pairs can be used to construct a linear map on the algebra of operators acting on odd dimensions which reduces to generalized Pauli channels for prime dimensions. We further deduce that the matrix representation of H.W. map with respect to H.W. basis is diagonal. Moreover the Reduction map can be obtained using H.W. maps.\\

Further it will be interesting to construct new examples of positive but not completely positive maps out of H.W. observables. Specifically construction of indecomposable positive maps will be a point of interest in this direction. Recently Schwarz condition for a class of qubit maps enjoying diagonal unitary and orthogonal symmetries has been derived in \cite{CB2024}. This allows one to characterize the Schwarz condition of qubit Pauli maps fully. It will be interesting to investigate the Schwarz condition for H.W. maps and to construct new examples of non completely positive Schwarz maps acting on high dimensional algebra of operators. 

\section{Acknowledgement:}

Authors would like to acknowledge Prof. Dariusz Chru\'sci\'nski for various fruitful discussions.  Saikat Patra is supported by \href{https://www.csir.res.in/}{{\sf CSIR, Govt. of India}} research fellowship file number 09/1184(0005)/2019-EMR-I.

\providecommand{\bysame}{\leavevmode\hbox to3em{\hrulefill}\thinspace}
\providecommand{\MR}{\relax\ifhmode\unskip\space\fi MR }
\providecommand{\MRhref}[2]{%
  \href{http://www.ams.org/mathscinet-getitem?mr=#1}{#2}
}
\providecommand{\href}[2]{#2}


\end{document}